\documentclass[preprintnumbers,11pt,onecolumn]{article}
\pdfoutput=1	

\usepackage{fullpage}
\usepackage{amsfonts, amssymb, amsmath, amsthm}
\usepackage{latexsym}
\usepackage[tracking=smallcaps]{microtype}	
\usepackage{url}
\usepackage{color}
\definecolor{DarkGray}{rgb}{0.1,0.1,0.5}
\usepackage[colorlinks=true,breaklinks, linkcolor=black,citecolor=black,urlcolor=black]{hyperref}	

\usepackage{graphicx}
\usepackage[tight, TABBOTCAP]{subfigure}







\newcommand{\ket}[1]{{|#1\rangle}}

\newcommand{\abs}[1]{{\lvert #1\rvert}}	




\def\A {{\mathcal A}}
\def\B {{\mathcal B}}

\def\S {{\mathcal S}}







\newcounter{sprows}

\newlength{\spheight}
\newlength{\spraise}

\newcommand{\comment}[1]{\emph{\color{blue}Comment:\color{black} #1}} 
\newlength{\commentslength}
\newcommand{\comments}[1]{
\hspace{-2\parindent}
\addtolength{\commentslength}{-\commentslength}
\addtolength{\commentslength}{\linewidth}
\addtolength{\commentslength}{-\parindent}
\fcolorbox{blue}{white}{\smallskip\begin{minipage}[c]{\commentslength}
\emph{Comments:}\begin{itemize}#1\end{itemize}\end{minipage}}\bigskip
}
\newcommand{\rem}[1]{}


\newtheorem{theorem}{Theorem}[section]
\newtheorem{lemma}[theorem]{Lemma}

\newtheorem{claim}[theorem]{Claim}

\newtheorem{proposition}[theorem]{Proposition}

\newtheorem{definition}[theorem]{Definition}


\newtheorem{example}[theorem]{Example}

\newfont{\subsubsecfnt}{ptmri8t at 11pt}
\renewcommand{\subparagraph}[1]{\smallskip{\subsubsecfnt #1.}}

\newcommand{\eqnref}[1]{\hyperref[#1]{{(\ref*{#1})}}}
\newcommand{\thmref}[1]{\hyperref[#1]{{Theorem~\ref*{#1}}}}
\newcommand{\lemref}[1]{\hyperref[#1]{{Lemma~\ref*{#1}}}}
\newcommand{\corref}[1]{\hyperref[#1]{{Corollary~\ref*{#1}}}}
\newcommand{\defref}[1]{\hyperref[#1]{{Definition~\ref*{#1}}}}
\newcommand{\secref}[1]{\hyperref[#1]{{Section~\ref*{#1}}}}
\newcommand{\figref}[1]{\hyperref[#1]{{Figure~\ref*{#1}}}}
\newcommand{\tabref}[1]{\hyperref[#1]{{Table~\ref*{#1}}}}
\newcommand{\remref}[1]{\hyperref[#1]{{Remark~\ref*{#1}}}}
\newcommand{\appref}[1]{\hyperref[#1]{{Appendix~\ref*{#1}}}}
\newcommand{\claimref}[1]{\hyperref[#1]{{Claim~\ref*{#1}}}}
\newcommand{\factref}[1]{\hyperref[#1]{{Fact~\ref*{#1}}}}
\newcommand{\propref}[1]{\hyperref[#1]{{Proposition~\ref*{#1}}}}
\newcommand{\exampleref}[1]{\hyperref[#1]{{Example~\ref*{#1}}}}
\newcommand{\conjref}[1]{\hyperref[#1]{{Conjecture~\ref*{#1}}}}

\allowdisplaybreaks[1]

\def\beq{\begin{equation}}
\def\eeq{\end{equation}}

\def\COLOR{}
\ifdefined\COLOR
\definecolor{alicered}{rgb}{1,0,0}
\definecolor{bobblue}{rgb}{0.196, 0.365, 0.745}
\definecolor{charliegreen}{rgb}{0, 0.522, 0.216}
\else
\definecolor{alicered}{rgb}{0,0,0}
\definecolor{bobblue}{rgb}{0,0,0}
\definecolor{charliegreen}{rgb}{0,0,0}
\fi

\newcommand{\Alice}{{\color{black} Alice}}
\newcommand{\Bob}{{\color{black} Bob}}
\newcommand{\Bobj}[1] {{\color{black} Bob$_{#1}$}}
\def \Bobone {\Bobj{1}}
\def \Bobtwo {\Bobj{2}}
\newcommand{\Charlie}{{\color{black} Charlie}}
\newcommand{\Charliej}[1] {{\color{black} Charlie$_{#1}$}}

\renewcommand{\A}{{\color{black}A}}
\renewcommand{\B}{{\color{black}B}}

\newcommand{\Bj}[1]{{\color{black}B_{#1}}}
\newcommand{\Cj}[1]{{\color{black}C_{#1}}}

\newcommand{\valueclassical}{\omega_c}
\newcommand{\valuequantum}{\omega_q}

\mathchardef\mhyphen="2D




\newcommand{\randlocal}[2]{{r_{\! \scriptscriptstyle {#1} \shortrightarrow {#2}}}}

\usepackage[shortlabels]{enumitem}  

\usepackage{array}
\usepackage{stmaryrd}  

\newcommand{\Q}{{\mathcal Q}}  
\renewcommand{\comment}[1]{}\renewcommand{\comments}[1]{}

\begin{document}
\def\compilefullpaper{}

\title{Test to separate quantum theory from non-signaling theories}
\author{Rui Chao \qquad Ben W. Reichardt \\ University of Southern California}
\date{}

\maketitle

\begin{abstract}
A Bell test separates quantum mechanics from a classical, local realist theory of physics.  However, a Bell test cannot separate quantum physics from all classical theories.  Classical devices supplemented with non-signaling correlations, e.g., the Popescu-Rohrlich ``nonlocal box," can pass a Bell test with probability at least as high as any quantum devices can.  After all, quantum entanglement does not allow for signaling faster than the speed of light, so in a sense is a weaker special case of non-signaling correlations.  It could be that underneath quantum mechanics is a deeper non-signaling theory.  

We present a test to separate quantum theory from powerful non-signaling theories.  The test extends the CHSH game to involve three space-like separated devices.  Quantum devices sharing a three-qubit GHZ state can pass the test with probability $5.1\%$ higher than classical devices sharing arbitrary non-signaling correlations between pairs.  

More generally, we give a test that $k$ space-like separated quantum devices can pass with higher probability than classical devices sharing arbitrary $(k-1)$-local non-signaling correlations.  
\end{abstract}

\section{Introduction}

Is quantum physics correct and complete, or is there a deeper physical theory underneath it?  
Einstein, Podolsky and Rosen~\cite{EinsteinPodolskyRosen35epr} proposed that quantum mechanics might lie above a deterministic, classical theory for physics.  This possibility can be tested.  Bell~\cite{Bell64epr} gave a test, refined by Clauser, Horne, Shimony and Holt~\cite{ClauserHorneShimonyHolt69chshgame}, that can be passed by quantum-mechanical systems, but not by a deterministic classical theory in which faster-than-light communication is impossible.  Recently, several groups have demonstrated ``loophole-free" Bell-inequality violations~\cite{Hensen15loopholefreeCHSH, Hensen16loopholetwo, Shalm15loopholeCHSH, Giustina15loopholeCHSH, Rosenfeld16loopholefree}, i.e., systems that unambiguously pass the CHSH test.  Up to high statistical confidence, this rules out the local-hidden-variable models suggested in~\cite{EinsteinPodolskyRosen35epr}, giving strong evidence that quantum mechanics is correct and complete.   

However, other classical models beyond the local-hidden-variable models could govern reality.  In particular, non-signaling correlations are a nondeterministic classical model constrained not to allow faster-than-light communication~\cite{Rastall85nonlocalbox, KhalfinTsirelson85nonlocal, PopescuRohrlich93nonlocal}.  The CHSH test cannot rule out a non-signaling theory of physics.  The ``nonlocal box" violates the Bell-CHSH inequality maximally, beyond what is possible in quantum physics.  In fact, any two-party correlations achievable quantumly can be achieved with non-signaling distributions.  Thus non-signaling theories are typically thought of as more general and more powerful than quantum physics.  For example, a cryptographic security proof based on a non-signaling security assumption is less conservative and therefore stronger than a proof that assumes the validity of quantum mechanics~\cite{BarrettHardyKent04diqkd, AcinGisinMasanes06diqkdnosignaling, Scarani06secrecyextraction, Masanes09diqkdnosignalingcomposablesecurity, MasanesPironioAcin10deviceindependent, HanggiRennerWolf10diqkd, BarrettColbeckKent12twodevicediqkd, MasanesRennerChristandlWinterBarrett06DIQKDnosignalingcomposablesecurity, HanggiRennerWolf09nosignalingprivacyamplificationimpossible}.  

\begin{figure}[b]
\centering
\includegraphics[scale=.125]{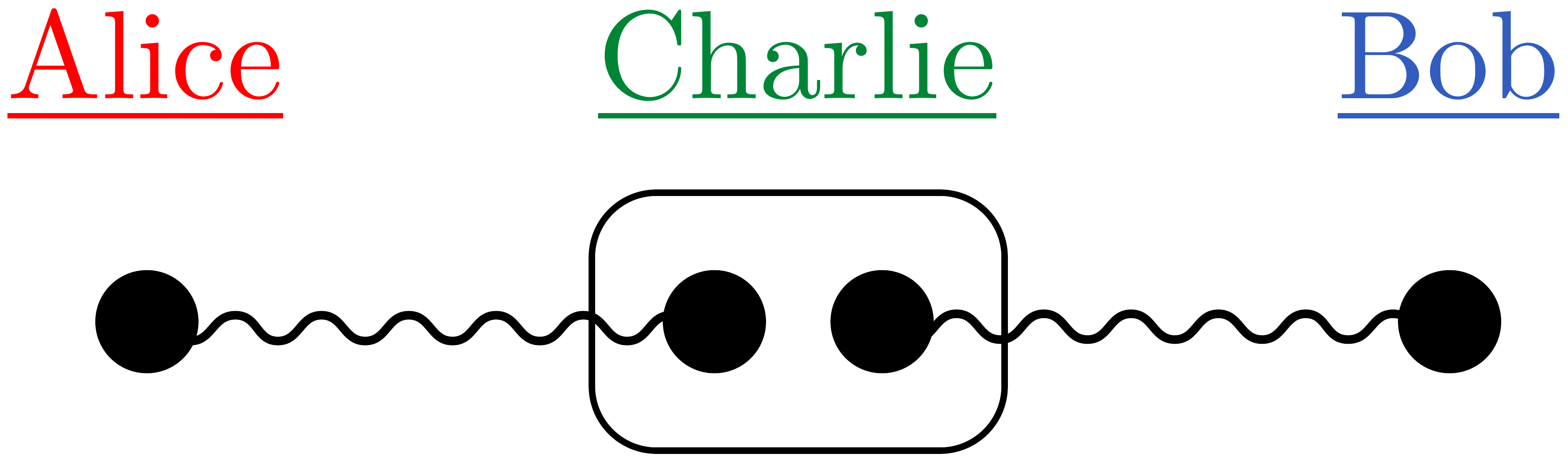}
\caption{In the Teleported CHSH game, \Alice\ and \Charlie, and \Bob\ and \Charlie\ share EPR states.  They cannot communicate.  \Charlie\ applies a Bell measurement, which might fail.  Conditioned on it succeeding, \Alice\ and \Bob\ have an EPR state with which they can win a CHSH game with probability~$85.4\%$.  If the initial EPR states are replaced with arbitrary \Alice-\Charlie\ and \Bob-\Charlie\ non-signaling resources, however, they can win with probability only $75\%$ (\thmref{t:postselectedteleport}).} \label{f:teleportation}
\end{figure}

We show here that, contrary to this intuition from the CHSH test, quantum entanglement can be more powerful than non-signaling resources.  We also provide a test that can separate quantum theory from a general class of non-signaling theories.  In particular, there is a three-party correlation that is achievable using quantum entanglement, but that cannot be achieved by classical parties using any combination of arbitrary two-party non-signaling resources (\thmref{t:distributedgame}).  In this aspect, quantum correlations are thus more powerful than two-local non-signaling correlations.  The separation is robust to constant error, and therefore it is in principle experimentally testable.  

\smallskip

The intuitive idea behind our results is that a non-signaling resource is a black box, that can only be accessed in a particular, classical way---whereas quantum correlations can be looked at in different directions and manipulated quantumly.  In particular, this means that unlike quantum entanglement, non-signaling resources cannot be teleported.  In fact, we show that CHSH games using entanglement-swapping to generate the quantum entanglement, as in~\cite{Hensen15loopholefreeCHSH}, are already testing a much simpler separation between two-local quantum and non-signaling resources (\figref{f:teleportation} and \thmref{t:postselectedteleport}).  Namely, the correlations they exhibit between three parties, \Alice, \Bob\ and \Charlie, cannot be achieved using two-party non-signaling resources between \Alice\ and \Charlie, and \Bob\ and \Charlie; a direct \Alice-\Bob\ non-signaling resource is required.  In contrast, quantumly, a Bell measurement on \Charlie's qubits of \Alice-\Charlie\ and \Bob-\Charlie\ EPR states can create entanglement between \Alice\ and \Bob.  

Our main theorem, \thmref{t:distributedgame}, gives a three-party game that quantum players can win with probability over $5.1\%$ higher than classical players sharing arbitrary two-local non-signaling correlations.  The game extends a CHSH game.  The three players, \Alice, \Bob\ and \Charlie, are meant to share a GHZ state $\tfrac{1}{\sqrt 2}(\ket{000} + \ket{111})$.  There are two sub-games.  \Charlie\ can measure in the $\sigma^x$ basis, leaving \Alice\ and \Bob\ with an EPR state (up to a possible $\sigma^z$ correction), with which they can play a CHSH game.  Alternatively, \Charlie\ can measure~$\sigma^z$, obtaining the same result as \Alice\ measuring~$\sigma^z$.  Importantly, a $\sigma^z$ measurement is part of the CHSH game, so \Alice\ cannot distinguish between the sub-games.  This makes it difficult for her to make use of any \Alice-\Bob\ or \Alice-\Charlie\ non-signaling resources.  (For example, dependence on the \Alice-\Bob\ resources should hurt her when playing the second sub-game with \Charlie.  This intuition can lead one astray, however, and the proof is somewhat subtle.  \appref{s:distributedgamecounterexample} defines games that are similar in spirit, but that do not separate quantum from two-local non-signaling theories.)  

Our test cannot rule out a three-party non-signaling correlation---after all, quantum theory is non-signaling.  Our test only rules out two-local non-signaling theories.  This is valuable because in a hypothetical model of classical physics supplemented by non-signaling resources, true three-local interactions would likely be more challenging to establish than two-local interactions.  Our test shows the impossibility of generating certain three-local non-signaling correlations from two-local correlations, showing a strong contrast between non-signaling resources and quantum entanglement.  

More generally, for any $k \geq 1$ we give a $(k+2)$-party game that quantum players sharing a GHZ state can win with strictly higher probability than classical devices sharing arbitrary $(k+1)$-local non-signaling correlations.  Lower bounds on the gaps are given in \figref{f:klocalgaps}.  Note that although the GHZ state is a $(k+2)$-local entangled state, it can be generated from two-local EPR states with an initial teleportation round, and this does not help the classical devices.  Thus, in a sense, two-local quantum correlations are more powerful than $(k+1)$-local classical non-signaling correlations.  

\smallskip

Barrett and Pironio~\cite{BarrettPironio05nonsignaling} have previously studied the problem of separating quantum correlations from two-local non-signaling correlations.  They give a five-player game, a ``graph game," that entangled quantum players can always win, but that classical players sharing arbitrary two-local non-signaling resources cannot win with probability one.  This result establishes that quantum theory can be more powerful than a classical theory with two-local non-signaling correlations.  
It has been extended to a $13$-player game that protects against four-local correlations, and $k^2 2^{2k-2}$-player games that protect against $k$-local correlations~\cite{AnshuMhalla12NSgraph}; and, non-constructively, to $n$ player games for sufficiently large~$n$ that protect against $< 0.11 n$ local correlations~\cite{HoyerMhallaPerdrix16graphgames}.  However, these results do not give any test to separate the theories.  The issue is that, potentially, the classical players could win with probability \emph{arbitrarily close} to one, so no experiment could statistically distinguish quantum from classical.  Such an eventuality would not be entirely surprising in light of nonlocality distillation: players sharing an unbounded number of noisy nonlocal boxes can use them in combination to implement a nonlocal box with an arbitrarily small positive noise rate~\cite{BrassardBuhrmanLindenMethodTappUnger05nonlocal}.  

\smallskip

Sections~\ref{s:nonsignaling} and~\ref{s:chsh} review the definitions of non-signaling correlations and the CHSH game.  \secref{s:impossibleteleport} analyzes the Teleported CHSH game of \figref{f:teleportation}.  \secref{s:extendedchshgame} defines a three-player Extended CHSH game, and in \secref{s:twolocalnsversusquantum} we upper bound the probability of winning the game using two-local non-signaling correlations.  \secref{s:openproblems} concludes with open problems.

\section{Non-signaling distributions} \label{s:nonsignaling}

\begin{definition}
A conditional probability distribution $\Pr[X, Y \vert A, B]$ is \emph{non-signaling} if 
\begin{align*}
&\text{$\Pr[X = x \vert A = a, B = b]$ does not depend on~$b$, and }  \\
&\text{$\Pr[Y = y \vert A = a, B = b]$ does not depend on~$a$.}
\end{align*}
\end{definition}

To interpret this definition, consider two parties, \Alice\ and \Bob.  \Alice's input is $a$ and her output a random variable~$X$, and \Bob's input and output are $b$ and~$Y$.  The non-signaling condition is a locality requirement, that the marginal distribution of \Alice's output should depend only on her input, and similarly for \Bob.  (A local-hidden-variable model, in contrast, adds a ``realism" constraint: $X$ should be a deterministic function of~$a$, and $Y$ a deterministic function of~$b$.)  Thus one cannot communicate from \Bob\ to \Alice\ by changing~$b$.  This conforms with relativity theory, in that information from \Bob\ should not be able to travel faster than the speed of light to \Alice.  A consequence of locality is that \Alice\ can choose her input and sample from~$X$ before \Bob\ has even decided on~$b$.  When later \Bob\ inputs~$b$, his output~$Y$ will depend on $a, b$ and~$X$.  

With more parties, the non-signaling condition is that no subset of the parties should be able to communicate to any other subset by changing their inputs~\cite{BarrettLindenMassarPironioPopescuRoberts04nonlocal}.  (Other references include~\cite{BarrettPironio05nonsignaling, Almeida10guessyourneighbor, PironioBancalScarani11tripartitenonlocal}.)  Mathematically, this is equivalent to the constraints 
\begin{equation*}
p(x_1, \ldots, \widehat{x_i}, \ldots, x_k \vert a_1, \ldots, a_i, \ldots a_k) 
= p(x_1, \ldots, \widehat{x_i}, \ldots, x_k \vert a_1, \ldots, a_i', \ldots a_k) 
\end{equation*}
for all $\vec x, \vec a$ and $a_i'$.  (That is, changing the $i$th party's input should not affect the marginal distribution for the other coordinates.)  A strictly weaker definition, that it should not be possible to communicate to any one party alone, has also been considered~\cite{BrandaoHarrow12definetti}.

\section{CHSH game} \label{s:chsh}

The CHSH game~\cite{ClauserHorneShimonyHolt69chshgame} involves two players, \Alice\ and \Bob.  Send independent, uniformly random bits $A$ and~$B$ to \Alice\ and \Bob, respectively.  The players reply with respective bits $X$ and~$Y$.  Accept if 
\begin{equation*}
X \oplus Y = A B
 \enspace .
\end{equation*}

The classical value of the game, i.e., the maximum probability with which classical players can win, using either a deterministic or randomized strategy, is 
\begin{equation*}
\valueclassical(\mathrm{CHSH}) = 3/4
 \enspace .
\end{equation*}

The quantum value, i.e., the maximum probability with which quantum players can win, using an arbitrary initial shared quantum state, is 
\begin{equation*}
\valuequantum(\mathrm{CHSH}) = \cos^2 \tfrac\pi8 \approx 85.4\%
 \enspace .
\end{equation*}
A strategy achieving this success probability uses a shared EPR state $\tfrac{1}{\sqrt 2}(\ket{00} + \ket{11})$.  On input~$0$ \Alice\ measures $\sigma^z = \big(\begin{smallmatrix}1&0\\0&-1\end{smallmatrix}\big)$, and on input~$1$ she measures $\sigma^x = \big(\begin{smallmatrix}0&1\\1&0\end{smallmatrix}\big)$.  On input~$b \in \{0,1\}$, \Bob\ measures $\tfrac{1}{\sqrt 2}(\sigma^z + (-1)^b \sigma^x)$.  Using this strategy, $\Pr[X \oplus Y = a b \vert A = a, B = b] = \cos^2 \tfrac\pi8$ for all $a, b$.  

With access to an appropriate non-signaling distribution, classical players can win with probability one.  The (Popescu-Rohrlich) nonlocal box~\cite{Rastall85nonlocalbox, KhalfinTsirelson85nonlocal, PopescuRohrlich93nonlocal} is a non-signaling distribution $\Pr[X=x, Y=y \vert A=a, B=b]$ in which $X \oplus Y = A B$ always.  

The CHSH game has the interesting property that a quantum or non-signaling strategy that beats the classical value must generate randomness, in certain technical senses~\cite{Pironioetal09randombell, ColbeckRennet11amplifyrandomness}.  We will use a rough contrapositive of this known property, namely that an arbitrary non-signaling strategy in which $\Pr[X = 0 \vert A = 0] = 1$ cannot beat the classical value: 

\begin{proposition} \label{t:chshdeterministica0}
In the CHSH game with \Alice's response to question $A = 0$ fixed to~$X = 0$, the non-signaling value is $3/4$, the same as the classical value.  
\end{proposition}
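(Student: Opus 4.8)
The plan is to collapse the non-signaling value into an optimization over the three single-input marginals, and then bound a small piece-wise-linear expression. Write the value as
$V = \tfrac14\sum_{a,b\in\{0,1\}}\Pr[X\oplus Y = ab \mid A=a,\, B=b]$.
Set $q_b = \Pr[Y=0\mid B=b]$ and $r = \Pr[X=0\mid A=1]$; by the non-signaling conditions these marginals are well-defined, i.e.\ $q_b$ does not depend on \Alice's input and $r$ does not depend on \Bob's. Since the hypothesis forces $X=0$ whenever $A=0$, the two $a=0$ terms (both with $ab=0$, so the win condition is $X=Y$, i.e.\ $Y=0$) become $\Pr[Y=0\mid A=0,B=b]=q_b$. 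The $(1,0)$ term is $\Pr[X=Y\mid A=1,B=0]$ and the $(1,1)$ term is $\Pr[X\neq Y\mid A=1,B=1]$, so $4V = q_0 + q_1 + \Pr[X=Y\mid A{=}1,B{=}0] + \Pr[X\neq Y\mid A{=}1,B{=}1]$.

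Next I would bound the two $a=1$ terms by an optimal-coupling estimate. For $\{0,1\}$-valued $X,Y$ with fixed marginals $\Pr[X=0]=r$, $\Pr[Y=0]=q$, setting $p_{00}=\Pr[X=0,Y=0]$ one has $\Pr[X=Y]=1-r-q+2p_{00}$ and $\Pr[X\neq Y]=r+q-2p_{00}$, both affine in $p_{00}$, with $p_{00}$ ranging over $[\max(0,r+q-1),\,\min(r,q)]$; the endpoints give $\Pr[X=Y]\le 1-|r-q|$ and $\Pr[X\neq Y]\le 1-|r+q-1|$. Applying the first bound with $q=q_0$ and the second with $q=q_1$, and crucially using that \Alice's $A=1$ marginal is the \emph{same} $r$ for both values of $b$ (non-signaling), I get $4V \le 2 + g$, where $g := q_0 + q_1 - |r-q_0| - |r+q_1-1|$.

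It then remains to show $g\le 1$ for all $q_0,q_1,r\in[0,1]$, which yields $V\le 3/4$. I would split on the two sign patterns: when $r\le q_0$ and $r+q_1\ge 1$ one finds $g\equiv 1$; the other three regions give $g = 2q_0-2r+1$, $g=2(q_0+q_1)-1$, and $g=2(q_1+r)-1$, and in each case the inequalities defining the region ($r\ge q_0$, or $q_0+q_1\le 1$, or $q_1+r\le 1$) force the value to be at most $1$. For tightness, the deterministic strategy in which both players always output $0$ respects the constraint and wins on every input except $(1,1)$, so it attains $3/4$, matching the classical value. The only tedium is the bookkeeping in this case analysis; the real content is in the first paragraph, where fixing \Alice's $A=0$ output forces \Bob's marginals $q_0,q_1$ to be shared between the $a=0$ and $a=1$ rows—this is precisely the coupling that prevents all four acceptance probabilities from being simultaneously large, and it is what I expect to be the conceptual crux rather than an obstacle.
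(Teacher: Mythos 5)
Your proof is correct, but it takes a genuinely different route from the paper's. The paper proves the statement in its general form (\propref{t:chshndeterministica0}, for $\mathrm{CHSH}_n$): after writing out the non-signaling constraints on the joint outcome probabilities, it observes that together with $\Pr[X{=}0\mid A{=}0]=1$ they are exactly the constraints of a unit flow through a directed graph, so the feasible set is a flow polytope whose extreme points are integer-valued flows, i.e.\ local deterministic strategies; hence the constrained non-signaling value equals the classical value. You instead bound the winning probability directly: the hypothesis turns the two $a=0$ terms into Bob's marginals $q_0,q_1$ exactly, the optimal-coupling inequalities $\Pr[X{=}Y]\le 1-\abs{r-q}$ and $\Pr[X{\neq}Y]\le 1-\abs{r+q-1}$ handle the $a=1$ terms, and a four-region case analysis shows the resulting expression never exceeds $3$ out of $4$ (I checked all four regions, including that $r\ge q_0$ and $r+q_1\le 1$ jointly imply $q_0+q_1\le 1$; they do). Your argument is more elementary and self-contained for $n=2$, and it isolates the mechanism --- non-signaling forces the same $r$ and the same $q_b$ to appear in multiple rows of the score. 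The paper's flow argument buys the generalization to all $n$ (which \thmref{t:robustseparationchshn} needs) essentially for free, whereas your case analysis would grow with $n$. Since the paper invokes the $\mathrm{CHSH}_n$ version elsewhere, your proof covers the stated proposition but not its role as a corollary of the general one.
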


\noindent
The proof is given in \appref{s:chshnanalysis}, \propref{t:chshndeterministica0}.

\section{Non-signaling correlations cannot teleport} \label{s:impossibleteleport}

We begin by considering a three-party protocol in which correlations, non-signaling or quantum, are only allowed between two of the pairs of parties.  

\begin{definition}
The \emph{Teleported CHSH game} is a protocol with three players: \Alice, \Bob\ and \Charlie.  The players are not allowed to communicate with each other.  The verifier sends independent, uniformly random bits $A$ and $B$ to \Alice\ and \Bob, respectively.  She receives in return bits $X$ and $Y$ from the respective players, and two bits $Z_1$ and $Z_2$ from \Charlie.  The verifier accepts if 
\begin{equation} \label{e:teleportedCHSH}
X \oplus Y \oplus A Z_1 \oplus (1-A) Z_2 = A B
 \enspace .  
\end{equation}
\end{definition}

\figref{f:teleportation} illustrates a quantum strategy for the game.  Initially, \Alice\ and \Charlie\ share an EPR state, as do \Bob\ and \Charlie, but there is no entanglement between \Alice\ and \Bob.  \Charlie\ can apply a Bell measurement to his halves of the two shared EPR states to teleport an EPR state between \Alice\ and \Bob.  Then \Alice\ and \Bob\ play a CHSH game.  The teleportation corrections reported by \Charlie\ are known only to the verifier, and are used to adjust \Alice's reported measurement outcome.  (This accounts for the $A Z_1 \oplus (1-A) Z_2$ terms in Eq.~\eqnref{e:teleportedCHSH}.)  

\begin{theorem} \label{t:teleport}
In the Teleported CHSH game, 
\begin{enumerate}
\item 
If \Alice\ and \Charlie\ share an EPR state $\tfrac{1}{\sqrt 2}(\ket{00} + \ket{11})$, and \Bob\ and \Charlie\ share an EPR state, then the players can win with probability $\valuequantum(\mathrm{CHSH}) = \cos^2 \tfrac \pi 8 \approx 85.4\%$.  

\item 
If the players are classical and \Alice\ and \Charlie\ share arbitrary non-signaling resources, as do \Bob\ and \Charlie, but \Alice\ and \Bob\ do not share any nontrivial non-signaling resources, then the players can win with probability at most $\valueclassical(\mathrm{CHSH}) = 3/4$.  

\item 
If \Alice\ and \Bob\ share a nonlocal box, then they can win with probability one.  
\end{enumerate}
\end{theorem}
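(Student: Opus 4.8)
The plan is to treat the three parts separately, with essentially all of the work in part~2. For part~1 I would analyze the entanglement-swapping strategy of \figref{f:teleportation}: \Charlie\ measures his two qubits in the Bell basis and reports the two-bit outcome, which with a suitable labeling as $(Z_1, Z_2)$ leaves \Alice\ and \Bob\ sharing an EPR state up to a Pauli correction $(\sigma^x)^{Z_2}(\sigma^z)^{Z_1}$ on \Alice's qubit. The point is that a $\sigma^z$ correction flips the outcome of \Alice's $\sigma^x$ measurement (input $A=1$) but commutes with her $\sigma^z$ measurement (input $A=0$), while a $\sigma^x$ correction does the reverse; hence the bit the verifier must add to \Alice's reported outcome is $Z_1$ when $A=1$ and $Z_2$ when $A=0$, i.e.\ $A Z_1 \oplus (1-A) Z_2$, exactly matching Eq.~\eqnref{e:teleportedCHSH}. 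After the correction \Alice\ and \Bob\ hold a clean EPR state and run the optimal CHSH strategy, winning with probability $\valuequantum(\mathrm{CHSH}) = \cos^2\tfrac\pi8$ for every $A,B$. For part~3 I would let \Alice\ and \Bob\ feed $A$ and $B$ into their shared nonlocal box and report its outputs as $X$ and $Y$, so that $X \oplus Y = AB$, and have \Charlie\ output $Z_1 = Z_2 = 0$; then Eq.~\eqnref{e:teleportedCHSH} holds identically.

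The heart of the theorem is part~2. Conditioning on the shared randomness, it suffices to bound each deterministic strategy acting on one fixed \Alice-\Charlie\ non-signaling box $P_{AC}(\alpha, \gamma_1 \mid A)$ and one independent \Bob-\Charlie\ box $P_{BC}(\beta, \gamma_2 \mid B)$, since the winning probability is affine in the strategy and the bound is preserved under averaging. Here \Alice\ computes $X = X(\alpha, A)$, \Bob\ computes $Y = Y(\beta, B)$, and \Charlie\ computes $Z_1, Z_2$ as functions of $(\gamma_1, \gamma_2)$ (\Charlie\ gets no question, so his box inputs are fixed). Two consequences of non-signaling drive everything. First, because neither \Alice\ nor \Bob\ can signal to \Charlie, the distribution of $(\gamma_1, \gamma_2)$ factors as $P_{AC}(\gamma_1)\,P_{BC}(\gamma_2)$ independently of $A$ and $B$. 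Second, because the two boxes are independent, $X$ and $Y$ are conditionally independent given $(\gamma_1, \gamma_2)$, with conditional biases $x_A(\gamma_1) := \Ex[(-1)^X \mid A, \gamma_1] \in [-1,1]$ and $y_B(\gamma_2) := \Ex[(-1)^Y \mid B, \gamma_2] \in [-1,1]$ depending only on the indicated variables.

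With these facts I would expand the acceptance probability. Writing $z_i := (-1)^{Z_i(\gamma_1,\gamma_2)}$ and averaging Eq.~\eqnref{e:teleportedCHSH} over uniform $A,B$, a short computation gives
\begin{equation*}
\Pr[\text{win}] = \tfrac12 + \tfrac18 \sum_{\gamma_1, \gamma_2} P_{AC}(\gamma_1)\, P_{BC}(\gamma_2)\,\Big[\, z_2\, x_0(\gamma_1)\,\big(y_0(\gamma_2) + y_1(\gamma_2)\big) + z_1\, x_1(\gamma_1)\,\big(y_0(\gamma_2) - y_1(\gamma_2)\big)\,\Big].
\end{equation*}
Now I would bound the bracket pointwise, for each fixed $(\gamma_1,\gamma_2)$. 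Absorbing the signs by setting $x_0' := z_2\, x_0(\gamma_1)$ and $x_1' := z_1\, x_1(\gamma_1)$, both still in $[-1,1]$, the bracket becomes $x_0'(y_0+y_1) + x_1'(y_0-y_1)$, which is exactly the CHSH form and hence at most $2$ in absolute value for $x_i', y_j \in [-1,1]$. Since the weights $P_{AC}(\gamma_1)P_{BC}(\gamma_2)$ sum to one, the full sum is at most $2$, giving $\Pr[\text{win}] \le \tfrac12 + \tfrac14 = \tfrac34 = \valueclassical(\mathrm{CHSH})$. (Alternatively, grouping \Charlie\ with \Alice\ recasts the protocol as an ordinary CHSH game for the effective output $X \oplus A Z_1 \oplus (1-A)Z_2$ played by a two-party non-signaling system, after which one can appeal to \propref{t:chshdeterministica0}; but the direct correlator bound seems cleanest.)

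The main obstacle is part~2, and within it the decisive structural fact is that \Charlie's two outputs enter only as the $\pm1$ signs $z_1, z_2$ multiplying \Alice's correlators. Establishing the conditional-independence decomposition from the non-signaling hypotheses, and then recognizing that these signs can be absorbed into the CHSH observables rather than manufacturing any genuine \Alice-\Bob\ correlation, is precisely the formalization of the slogan that non-signaling resources cannot be teleported. Parts~1 and~3 are routine constructions by comparison.
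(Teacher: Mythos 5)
Your proposal is correct, but part~2 is proved by a genuinely different route than the paper's. The paper's proof exploits the fact that \Charlie\ receives no question: for every \Alice-\Charlie\ resource, \Charlie's marginal output distribution is therefore a fixed distribution, so one can sample his outputs first as \emph{shared randomness} and let \Alice\ sample her outputs from the conditional distribution given her input. Repeating this for the \Bob-\Charlie\ resources collapses all non-signaling resources to shared randomness, reducing the whole strategy to a local-hidden-variable one; absorbing $A Z_1 \oplus (1-A) Z_2$ into \Alice's output then gives exactly a classical CHSH game and the bound $\valueclassical(\mathrm{CHSH}) = 3/4$. That simulation argument is game-independent --- the paper notes that any two-player game $G$ can replace CHSH, with bound $\valueclassical(G)$. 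Your argument instead conditions on \Charlie's outputs and bounds the CHSH correlator directly, absorbing the signs $z_1, z_2$ into \Alice's observables. This is correct and self-contained, and it makes the ``\Charlie's outputs are just signs'' mechanism explicit, but it is tied to the exor structure of CHSH and would have to be redone for other games. Parts~1 and~3 are routine and match what the paper takes for granted.

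One point to tighten in part~2: you assert that \Charlie's box inputs are fixed and hence that $(\gamma_1,\gamma_2)$ is distributed as the product $P_{AC}(\gamma_1)\,P_{BC}(\gamma_2)$. In general \Charlie\ may wire his resources adaptively (e.g.\ feed the output of the \Alice-\Charlie\ box into the \Bob-\Charlie\ box), so the product form can fail; and each pair may share many boxes rather than one, so the reduction to a single box per pair needs the closure of bipartite non-signaling boxes under local wirings (or the paper's multi-round resource formalism). Neither issue is fatal: all your bound actually uses is that \Charlie's full transcript $\gamma$ has a distribution independent of $(A,B)$ (non-signaling from \Alice\ and from \Bob\ to \Charlie) and that $X$ and $Y$ are conditionally independent given $\gamma$ with biases in $[-1,1]$; the pointwise bound on the bracket then goes through with arbitrary weights $P(\gamma)$ summing to one. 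Stating the argument at that level of generality would close the gap.
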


\begin{proof}
If \Alice\ and \Charlie\ sample a non-signaling distribution where \Charlie's input is fixed, then \Charlie's marginal output distribution is known and therefore can be sampled using shared randomness.  \Alice\ can sample from her conditional output distribution using her input.  Thus shared randomness allows for sampling from the non-signaling resource's outputs, where \Alice's output depends on her input but \Charlie's output does not.  

Repeating this argument, we see that a protocol using arbitrary non-signaling resources can be simulated using shared randomness provided that one party to each resource has no input.  
\end{proof}

A natural variant of the Teleported CHSH game allows for \Charlie's measurement to fail.  

\begin{theorem} \label{t:postselectedteleport}
Consider the three-party protocol in which the verifier sends independent, uniformly random bits $A$ and $B$ to \Alice\ and \Bob, respectively, receives bits $X$ and $Y$ back from the respective players, and either ``success" or ``failure" from \Charlie.  On ``success," the verifier accepts if $X \oplus Y = A B$.  

Then there is a quantum strategy, using \Alice-\Charlie\ and \Bob-\Charlie\ EPR states, for which $\Pr[\text{success}] = 1/4$ and $\Pr[\text{accept} \vert \text{success}] = \valuequantum(\mathrm{CHSH}) = \cos^2 \tfrac\pi8$.  

However, for any classical strategy with arbitrary \Alice-\Charlie, and \Bob-\Charlie\ non-signaling resources, but no \Alice-\Bob\ non-signaling resources, $\Pr[\text{accept} \vert \text{success}] \leq \valueclassical(\mathrm{CHSH}) = 3/4$.  
\end{theorem}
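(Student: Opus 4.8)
The plan is to treat the quantum construction and the non-signaling bound separately, reusing the simulation argument from the proof of \thmref{t:teleport}.

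For the quantum strategy I would let \Alice\ and \Charlie\ share an EPR state and \Bob\ and \Charlie\ share a second EPR state, and have \Charlie\ perform a complete Bell-basis measurement on his two qubits. By entanglement swapping, conditioned on each of the four equally likely Bell outcomes, \Alice\ and \Bob\ are left sharing an EPR state up to a Pauli correction determined by \Charlie's outcome. \Charlie\ announces ``success'' exactly on the single outcome that requires no correction, so $\Pr[\text{success}] = 1/4$, and conditioned on success \Alice\ and \Bob\ hold a clean EPR state $\tfrac{1}{\sqrt 2}(\ket{00}+\ket{11})$. Running the optimal CHSH strategy from \secref{s:chsh} then gives $\Pr[\text{accept}\vert\text{success}] = \valuequantum(\mathrm{CHSH}) = \cos^2\tfrac\pi8$, as claimed. (No correction bits are reported here, so declaring success only on the trivial-correction outcome is what lets \Alice\ and \Bob\ proceed without knowing \Charlie's result.)

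For the classical non-signaling bound, the key observation is that \Charlie\ receives no input from the verifier, so his inputs to the \Alice-\Charlie\ and \Bob-\Charlie\ resources are independent of $A$ and $B$. This is exactly the hypothesis of the simulation argument in the proof of \thmref{t:teleport}: since one party (\Charlie) to each non-signaling resource has no verifier input, each resource can be replaced by shared randomness. Concretely, for each box \Charlie's marginal output is independent of the other party's input, so it can be sampled as shared randomness; folding in \Charlie's private randomness as well, call the combined shared randomness $\lambda$. Given $\lambda$, \Alice\ samples $X$ as a function of $(A,\lambda)$ and \Bob\ samples $Y$ as a function of $(B,\lambda)$, reproducing the entire protocol with no direct \Alice-\Bob\ correlation.

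I would then argue that conditioning on success cannot exceed the classical CHSH value. In this simulation \Charlie's success/failure announcement is computed from his own outputs and private randomness, all of which are absorbed into $\lambda$; in particular the success event depends only on $\lambda$ and is statistically independent of the verifier's bits $A$ and $B$. Fix any $\lambda$ in the success event: conditioned on $\lambda$, \Alice's response is a function of $A$ alone and \Bob's of $B$ alone, so over uniformly random inputs this is a local deterministic CHSH strategy and wins with probability at most $\valueclassical(\mathrm{CHSH}) = 3/4$. Because success is independent of $(A,B)$, conditioning on it merely reweights the distribution over $\lambda$ without coupling it to the inputs, so averaging the per-$\lambda$ bound gives $\Pr[\text{accept}\vert\text{success}] \leq 3/4$. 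The main obstacle is precisely this last step: a priori, letting \Charlie\ abort (postselection on ``success'') might let the players discard unfavorable runs and boost the conditional winning probability above $3/4$. The resolution is that, after reducing to shared randomness, the success bit is a function of $\lambda$ that is independent of $A$ and $B$; hence the postselection reweights the mixture of local strategies but cannot correlate the choice of strategy with the challenge, and each strategy in the mixture already obeys the CHSH bound. This independence is exactly what a direct \Alice-\Bob\ nonlocal box, as in \thmref{t:teleport}, would destroy.
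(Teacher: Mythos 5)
Your proposal is correct and follows essentially the same route the paper intends: the quantum part via entanglement swapping with ``success'' declared only on the trivial Bell outcome, and the non-signaling part by reusing the shared-randomness simulation from the proof of \thmref{t:teleport} (possible because \Charlie\ has no verifier input) together with the observation that the success event is a function of the shared randomness alone and hence independent of $(A,B)$, so postselection cannot correlate the induced local strategy with the challenge. Your explicit handling of why postselection does not help is a welcome elaboration of a step the paper leaves implicit.
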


\thmref{t:postselectedteleport} is useful because this situation arises in experiments.  In the loophole-free Bell-inequality violation of Hensen et al.~\cite{Hensen15loopholefreeCHSH}, \Alice\ and \Bob's qubits are single electrons, which are each entangled with a photon.  The photons are sent to \Charlie, who attempts to project the photons onto a singlet state.  When successful, this ``entanglement-swapping" procedure entangles the electrons.  However, the success rate is only $6.4 \times 10^{-9}$, because of the difficulty of generating simultaneous photons and photon loss.  By \thmref{t:postselectedteleport}, \Charlie's low success rate does not create any loopholes; still $\Pr[\text{accept} \vert \text{success}] \leq 3/4$ in the non-signaling case with no \Alice-\Bob\ resources.  

Note that any two-player game~$G$ can be used in place of the CHSH game in \thmref{t:postselectedteleport}.  In the quantum case, \Charlie\ can attempt to teleport any needed entanglement between \Alice\ and \Bob, declaring success only for the trivial Bell measurement outcome(s).  The value of the game with \Alice-\Charlie\ and \Bob-\Charlie\ non-signaling resources is at most $\valueclassical(G)$.  

Unfortunately, in an experiment it is difficult to enforce that certain pairs of parties are allowed to share non-signaling resources, while other pairs are not.

\section{Extended CHSH game} \label{s:extendedchshgame}

We define the $k$-Extended CHSH game, or ``$\mathrm{CHSH} + k$" for short, with players \Alice, \Bob\ and $\text{\Charliej{1}}, \ldots, \text{\Charliej{k}}$.  With $A$ and $X$ we denote \Alice's input and output, respectively, and similarly $B$ and~$Y$ for \Bob, and $C_j$ and $Z_j$ for \Charliej{j}.  These messages are all single bits.  

There are two types of questions: 
\begin{itemize}
\item 
In a \emph{consistency} question, the inputs are $A = 0$ and $C_J = 0$ for a uniformly random index $J \in [k]$.  The verifier accepts if $X = Z_J$.  
\item 
In a \emph{game} question, the verifier sends \Alice\ $A \in \{0,1\}$ and \Bob\ $B \in \{0,1\}$.  If $A = 0$, the verifier accepts if $X = Y$.  If $A = 1$, then the verifier also sends $C_j = 1$ to \Charliej{j} for every~$j$, and she accepts if $(X \oplus Z_1 \oplus \cdots \oplus Z_k) \oplus Y = B$.  
\end{itemize}
A consistency question is chosen with probability $q = 1 - 2 / (3^k + 1)$.  (The precise value is set to optimize \claimref{t:markov} below.)  

$\mathrm{CHSH} + 0$ is a standard CHSH game.  
Observe that $\mathrm{CHSH} + k$ embeds into a simpler game, in which the verifier chooses inputs $A = C_1 = \cdots = C_k \in \{0,1\}$ and $B \in \{0,1\}$, and accepts if $X = Z_1 = \cdots = Z_k = Y$, when $A = 0$, or if $(X \oplus Z_1 \oplus \cdots \oplus Z_k) \oplus Y = B$, when $A = 1$.  In this latter game, the roles of \Alice\ and $\text{\Charliej{1}}, \ldots, \text{\Charliej{k}}$ are all symmetrical.  We have chosen a less symmetrical presentation to help organize our analysis.  

\begin{proposition} \label{t:distributedgamequantumvalue}
There exists a quantum strategy for the $\mathrm{CHSH} + k$ game, using the shared state $\tfrac{1}{\sqrt 2}(\ket{0^{k+2}} + \ket{1^{k+2}})$, such that 
\begin{equation*}\begin{aligned}
\Pr[\mathrm{win} \vert \text{\emph{consistency question}}] &= 1 \\
\Pr[\mathrm{win} \vert \text{\emph{game question}}] &= \cos^2 \tfrac\pi8 
 \enspace .
\end{aligned}\end{equation*}
Thus $\Pr[\mathrm{win}] = 1 - (1-q) \sin^2 \frac\pi8$.  
\end{proposition}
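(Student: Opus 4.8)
The plan is to exhibit an explicit GHZ strategy and to show that, on game questions, the $k+1$ qubits held by \Alice\ and $\text{\Charliej{1}}, \ldots, \text{\Charliej{k}}$ act \emph{together} as a single ``virtual \Alice'' running the optimal CHSH strategy against \Bob, while on consistency questions the $\sigma^z$-correlations of the GHZ state force agreement with certainty. Concretely, the $k+2$ players share $\tfrac1{\sqrt2}(\ket{0^{k+2}} + \ket{1^{k+2}})$. On a consistency question, \Alice\ (input~$0$) and the queried $\text{\Charliej{J}}$ (input~$0$) each measure $\sigma^z$. On a game question, \Alice\ measures $\sigma^z$ if $A=0$ and $\sigma^x$ if $A=1$; each $\text{\Charliej{j}}$, which is queried only when $A=1$, measures $\sigma^x$; and \Bob\ measures $\tfrac1{\sqrt2}(\sigma^z + (-1)^B \sigma^x)$, exactly as in the optimal CHSH strategy reviewed in \secref{s:chsh}.

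First I would dispatch the consistency question. Tracing out all but two qubits of the GHZ state gives the classically correlated state $\tfrac12(\ketbra{00}{00} + \ketbra{11}{11})$, so $\sigma^z$ measurements on the qubits of \Alice\ and $\text{\Charliej{J}}$ always yield equal outcomes; hence $X = Z_J$ with certainty and $\Pr[\mathrm{win} \mid \text{consistency}] = 1$. Next I would reduce the game question to CHSH by defining the \emph{effective \Alice-side output} $P := X$ when $A=0$ and $P := X \oplus Z_1 \oplus \cdots \oplus Z_k$ when $A=1$, so that the acceptance condition becomes exactly the CHSH predicate $P \oplus Y = AB$. The crucial observation is that $P$ is the $\{0,1\}$-valued outcome of a single two-outcome observable on the \Alice-side qubits: the single-qubit $O_A^{(0)} = \sigma^z$ on \Alice's qubit for $A=0$, and the product $O_A^{(1)} = \sigma^x \tensor \sigma^x \tensor \cdots \tensor \sigma^x$ over the qubits of \Alice\ and all $\text{\Charliej{j}}$ for $A=1$, since the XOR of the individual $\sigma^x$ outcomes is precisely the eigenvalue of this tensor product. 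With \Bob's observable $O_B^{(B)} = \tfrac1{\sqrt2}(\sigma^z + (-1)^B \sigma^x)$, the win probability on game input $(A,B)$ equals $\tfrac12\bigl(1 + (-1)^{AB}\,\langle O_A^{(A)} \tensor O_B^{(B)}\rangle\bigr)$, the expectation being taken in the GHZ state.

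I would then evaluate the four relevant GHZ correlators directly. Since $(\sigma^x)^{\tensor(k+2)}$ swaps $\ket{0^{k+2}}$ and $\ket{1^{k+2}}$, one has $\langle O_A^{(1)} \tensor \sigma^x_B\rangle = 1$; the mixed correlators $\langle \sigma^z_A \tensor \sigma^x_B\rangle$ and $\langle O_A^{(1)} \tensor \sigma^z_B\rangle$ both vanish, because each applies $\sigma^x$ to a nonempty \emph{proper} subset of the qubits and therefore sends the GHZ support into its orthogonal complement; and $\langle \sigma^z_A \tensor \sigma^z_B\rangle = 1$. Substituting, the correlator is $1/\sqrt2$ when $A=0$ and $(-1)^B/\sqrt2$ when $A=1$, so for every one of the four game inputs the win probability equals $\tfrac12(1 + \tfrac1{\sqrt2}) = \cos^2\tfrac\pi8$; the only case to watch is $(A,B)=(1,1)$, where the negative correlator is compensated by $(-1)^{AB} = -1$. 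Combining the two question types with weights $q$ and $1-q$ then gives $\Pr[\mathrm{win}] = q + (1-q)\cos^2\tfrac\pi8 = 1 - (1-q)\sin^2\tfrac\pi8$.

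The main obstacle is conceptual rather than computational: one cannot simply claim that \Alice\ and \Bob\ share an EPR state on game questions, because tracing out the $\text{\Charliej{j}}$ leaves the classical mixture above, not an entangled pair. The EPR pair materializes only once the $\text{\Charliej{j}}$ measure $\sigma^x$, which teleports it between \Alice\ and \Bob\ up to a $\sigma^z$ correction determined by the parity $Z_1 \oplus \cdots \oplus Z_k$; the definition $P = X \oplus Z_1 \oplus \cdots \oplus Z_k$ is exactly what absorbs this correction. Phrasing the entire analysis in terms of the fixed GHZ correlators, rather than tracking post-measurement states, sidesteps this phase bookkeeping and makes the sign in the $(1,1)$ case transparent.
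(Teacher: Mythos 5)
Your proposal is correct and uses exactly the same strategy as the paper: $\sigma^z$ on input $0$ and $\sigma^x$ on input $1$ for \Alice\ and the \Charlie s, and $\tfrac{1}{\sqrt 2}(\sigma^z + (-1)^B \sigma^x)$ for \Bob, on the shared GHZ state. The paper's proof merely states these measurements and appeals to the standard CHSH analysis, whereas you explicitly carry out the reduction to the CHSH predicate via $P = X \oplus Z_1 \oplus \cdots \oplus Z_k$ and the GHZ correlator computation; your verification is accurate, including the sign cancellation in the $(A,B)=(1,1)$ case.
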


\begin{proof}
\Alice\ and $\text{\Charliej{1}}, \ldots, \text{\Charliej{k}}$ behave symmetrically: on input~$0$ each measures $\sigma^z = \big(\begin{smallmatrix}1&0\\0&-1\end{smallmatrix}\big)$, and on input~$1$ each measures $\sigma^x = \big(\begin{smallmatrix}0&1\\1&0\end{smallmatrix}\big)$.  On input~$b \in \{0,1\}$, \Bob\ measures $\tfrac{1}{\sqrt 2}(\sigma^z + (-1)^b \sigma^x)$.  These are the same measurements used by an optimal strategy for the standard CHSH game.  
\end{proof}

Note that on input $A = 0$, \Alice\ cannot distinguish between a game and a consistency question, nor between the different consistency questions.  Intuitively, for classical players, even if \Alice\ shares two-local non-signaling resources with the others, it should be difficult for her to use these correlations.  One should be very wary of this intuition, however, since simple variants of this construction, for which the intuition might seem equally valid, provably do not work.  See \appref{s:distributedgamecounterexample}.  

When $A = 0$, the verifier's decision to accept or reject depends only on the answers from \Alice\ and either \Bob\ or \Charliej{J}.  This property will be essential for our later analysis.

\section{Intuition and technical ideas for $\mathrm{CHSH} + 1$ separation} \label{s:twolocalnsversusquantum} 

\begin{theorem} \label{t:distributedgame}
In the Extended CHSH game, $\mathrm{CHSH} + 1$, classical players sharing two-local non-signaling resources can win with probability at most $7/8$.  
\end{theorem}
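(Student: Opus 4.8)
The plan is to split the acceptance probability by question type and to recognize the ``game'' part as a disguised CHSH game. Writing $q = 1/2$ for the consistency probability (this is the value $1 - 2/(3^k+1)$ at $k=1$), let $c = \Pr[X = Z \mid A=0, C=0]$ be the probability of passing a consistency question and $g$ the probability of passing a game question, so that $\Pr[\mathrm{win}] = q\,c + (1-q)\,g = \tfrac12 c + \tfrac12 g$. The key observation is that a game question is exactly a CHSH game between \Bob\ and a ``combined'' player \Alice--\Charlie: if on \Alice's input $0$ we read off $X$, and on input $1$ we read off $X \oplus Z$ (with \Charlie\ queried on $C = 1$), then the game predicate becomes $(\text{combined output}) \oplus Y = A\cdot B$, the CHSH predicate with uniform inputs $A, B$. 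Crucially, since on $A = 0$ \Alice\ cannot tell a consistency question from a game question, her output there is a single random variable, which I will call $X_0$; and the combined player's input-$0$ output is $X_0$ alone, with no contribution from \Charlie. Note that arbitrary non-signaling correlations across the \Bob\ versus \Alice--\Charlie\ cut would already give $g = 1$, so the bound must use the \emph{two-local} structure essentially. (Indeed, \propref{t:distributedgamequantumvalue} exhibits a genuinely tripartite quantum correlation achieving $g = \cos^2\tfrac\pi8$ with $c = 1$, violating the bound $g+c \le 7/4$ that the trade-off below gives for two-local strategies; so no purely non-signaling argument can suffice.)

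The heart of the proof is that trade-off: two-locality forces $g + c \le 7/4$, equivalently $g \le \tfrac34 + (1 - c)$. I would prove this by reducing a game question to a standard two-party CHSH game in which one player's input-$0$ answer is fixed, so that \propref{t:chshdeterministica0} caps its non-signaling value at $3/4$. The reduction exploits that the combined player's input-$0$ output is produced by \Alice\ alone, through her resources with \Bob\ and with \Charlie; hence any CHSH-beating (nonlocal) correlation between the combined player's input-$0$ output and \Bob\ must be carried by the \Alice--\Bob\ resource. The consistency check, however, demands $X_0 = Z_0$, tying \Alice's input-$0$ output to \Charlie, who has no access to the \Alice--\Bob\ resource. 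Using the non-signaling constraints (a monogamy-type effect: \Bob\ cannot be strongly nonlocally correlated with both \Alice\ and \Charlie), one couples $X_0$ to \Charlie's value and thereby fixes the combined player's input-$0$ answer up to shared randomness; \propref{t:chshdeterministica0} then bounds the resulting CHSH value by $3/4$, while the event $X_0 \neq Z_0$, of probability $1 - c$, accounts for the slack.

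Combining the two ingredients gives exactly the claimed bound: $\Pr[\mathrm{win}] = \tfrac12 c + \tfrac12 g \le \tfrac12 c + \tfrac12\big(\tfrac34 + 1 - c\big) = \tfrac78$. The cancellation of the $c$-dependent terms is precisely why $q = 1/2$ is the right weight, and this is the optimization that fixes the value of $q$.

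I expect the trade-off lemma to be the main obstacle. The difficulty is not the bookkeeping but correctly extracting a fixed-answer CHSH game from the two-local network while respecting every non-signaling constraint: one must show that \Charlie's resources cannot be used both to satisfy the consistency check and to push the game beyond $3/4$. This has to be argued carefully rather than by appeal to the heuristic that ``\Alice\ cannot profit from her correlations,'' since \appref{s:distributedgamecounterexample} exhibits superficially similar games for which that heuristic is simply false.
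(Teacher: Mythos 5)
Your outline coincides with the paper's: for $k=1$ the question weights give $\Pr[\mathrm{win}]=\tfrac12 c+\tfrac12 g$, and the paper's proof of \thmref{t:robustseparationchsh} specializes to exactly your trade-off $1/4 \le (1-g)+(1-c)$, i.e.\ $g+c\le 7/4$, obtained by making \Alice's $A=0$ answer deterministic and invoking \propref{t:chshdeterministica0}. So the target inequality, the role of $q=1/2$, and the final arithmetic are all right. The problem is that the sentence ``one couples $X_0$ to \Charlie's value and thereby fixes the combined player's input-$0$ answer up to shared randomness'' is not a step of a proof --- it \emph{is} the theorem, and the paper spends Sections 5--6 (\lemref{t:fixrandomnessgeneral} and the $c_z$ argument that follows it) making it true. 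Two specific obstacles that your sketch does not address: (i) whether $X_0$ ``depends on'' the \Alice--\Bob\ resource is a property of how the resources' internal randomness is parameterized, not of the strategy (\exampleref{t:randomnessdependenceexample}: \Alice\ can launder \Alice--\Bob\ randomness through an \Alice--\Charlie\ resource); the paper's fix is to group all \Alice--\Charlie\ correlations into a single multi-round resource and reparameterize it by \emph{sampling} \Charlie's inputs and local randomness (right-to-left factorization, \figref{f:righttoleftfactorization}). (ii) You cannot simply condition on the event $X_0=Z_0$ --- that conditioning destroys the non-signaling structure across the \Bob\ cut. The paper instead fixes a good value $r_v^*$ of \Alice's non-\Charlie\ randomness and pays via the triangle inequality $\chi_{v^*\neq W'}\le\chi_{v^*\neq U}+\chi_{U\neq W'}$, whose usefulness rests on the one genuinely non-obvious fact in the whole argument: changing \Alice's behavior cannot alter the \emph{joint} distribution of the other players' outputs, so $\Pr[U\neq W'_{\Q'}]=\Pr[U\neq W_{\Q'}]$. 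Your coefficient of exactly $1$ on $(1-c)$ happens to be what this bookkeeping produces ($2\epsilon_{\Q_1}$ per affected question, times $\Pr[A=0\mid\text{game}]=\tfrac12$), but attributing it to ``the event $X_0\neq Z_0$'' is not a derivation.

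There is also a second derandomization you omit entirely. After the consistency question has decoupled $X_0$ from the \Alice--\Bob\ resource, $X_0$ still depends on the \Alice--\Charlie\ resource; folding \Charlie\ into a ``combined player'' does not make that dependence harmless, because \propref{t:chshdeterministica0} needs the input-$0$ answer to be deterministic and conditioning on the combined player's internal randomness must preserve non-signaling across the cut to \Bob. The paper handles this with a separate argument: on $A=0$ game questions the predicate involves only \Alice\ and \Bob, the \Charlie s' inputs are fixed, so they can be pushed to ``go last,'' and \Alice's answer can be replaced by the majority value $c_z$ given the shared randomness $z$ without decreasing the winning probability. Without both steps --- and given that \appref{s:distributedgamecounterexample} shows the monogamy heuristic fails for closely related games --- the central inequality $g\le\tfrac34+(1-c)$ remains unproved in your proposal.
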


\noindent
This is $\frac{\sqrt 2 - 1}{8} > 5.17\%$ below the quantum winning probability lower bound from \propref{t:distributedgamequantumvalue}.  

Before going into detail, let us explain the intuition.  

Consider a consistency question, so $A = 0$.  The verifier's decision to accept or reject does not depend on \Bob.  Furthermore, given \Charlie's answer \Alice\ has only one correct response.  Therefore, if \Alice's answer depends very much on the randomness she gets from her correlations with \Bob, she will necessarily be wrong a substantial fraction of the time.  If the players win with high probability, then replacing her strategy on input $A = 0$ with one that is independent of the \Alice-\Bob\ correlations does not change the success probability by very much.  

Next consider a game question with $A = 0$.  Again, \Alice\ has a unique correct response given \Bob's answer, and the verifier's acceptance predicate does not depend on \Charlie.  (This is only true for $A = 0$.)  Intuitively, then, \Alice's modified strategy on input $A = 0$ cannot depend much on her correlations with \Charlie, either.  \Alice's response to $A = 0$ must be nearly deterministic, so \propref{t:chshdeterministica0} upper-bounds the players' success probability by the classical value~$\valueclassical(\mathrm{CHSH})$.  

In order to make the analysis rigorous, we need to explain what it means for \Alice's answer to depend on the randomness from her correlations with \Bob.\footnote{This concept is not obvious.  As an example, say that \Alice\ takes the output of a resource she shares with \Bob, feeds that as input to a resource shared with \Charlie, and then outputs its answer.  Consider, does this strategy ``depend on" the \Alice-\Bob\ resource?  How do we replace it with one that does not depend on that resource?}  
This begins with factoring the underlying randomness of a non-signaling distribution.  
Consider a two-party non-signaling distribution $p(x, y \vert a, b)$.  Since $p(x \vert a, b)$ is independent of~$b$, the distribution can be factored as $p(x, y \vert a, b) = p(x \vert a) p(y \vert a, b, x)$.  Without loss of generality, then, we may assume that the underlying sample space has the factorized form $\Omega = [0, 1] \times [0, 1]$; and for a uniformly random sample $(r, s) \in \Omega$, $r$ determines~$x$ from~$a$, and $s$ determines $y$ from $r, a, b$.  (That is, $x$ is a deterministic function of $a$ and~$r$, and $y$ is a deterministic function of $r, a, b$ and~$s$.)  

Call $p(x, y \vert a, b) = p(x \vert a) p(y \vert a, b, x)$ the \emph{left-factorization} of the resource, and $p(x, y \vert a, b) = p(y \vert b) p(x \vert a, b, y)$ the \emph{right-factorization}.  See \figref{f:factorizations}.  

\begin{example} \label{t:randomnessdependenceexample}
For example, consider a two-party resource that has one input bit $a$ from \Alice\ and no inputs from \Bob, and that outputs to \Alice\ and \Bob\ the same uniformly random bit $x$, independent of \Alice's input.  The randomness of this resource can be parameterized by a uniformly random $r \in [0, 1]$, 
\begin{align*}
a &= 0 \; \Longrightarrow \; x = \begin{cases} 0 & \text{if $r < 1/2$} \\ 1 & \text{if $r > 1/2$} \end{cases}&
a &= 1 \; \Longrightarrow \; x = \begin{cases} 1 & \text{if $r < 1/2$} \\ 0 & \text{if $r > 1/2$} \end{cases}
\end{align*}

An equivalent parameterization, with no dependence on~$a$, is 
\begin{equation*}
x = \begin{cases} 0 & \text{if $r < 1/2$} \\ 1 & \text{if $r > 1/2$} \end{cases}
\end{equation*}
\end{example}

\begin{figure}
\centering
\begin{tabular}{c@{$\qquad\qquad$}c@{$\qquad\qquad$}c}
\subfigure[]{\raisebox{.2cm}{\includegraphics[scale=.5]{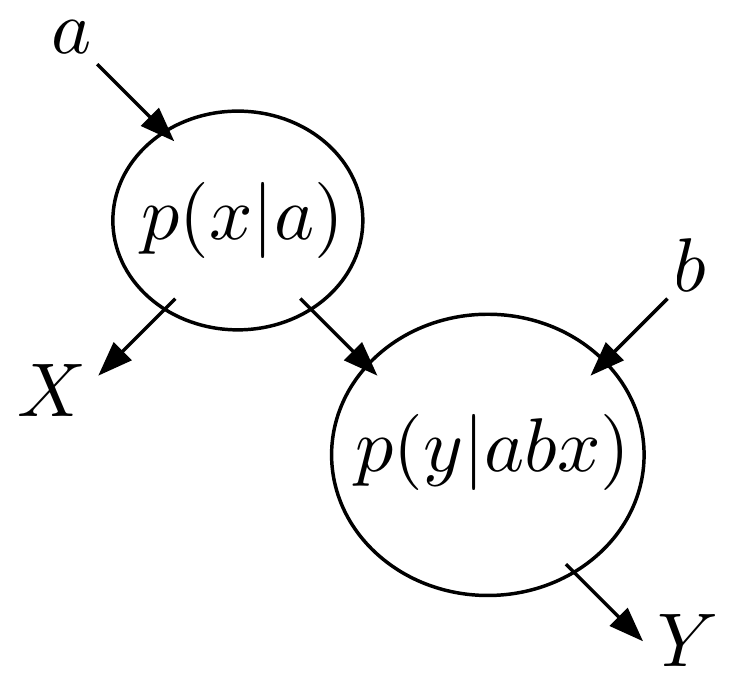}}} & 
\subfigure[]{\raisebox{.2cm}{\includegraphics[scale=.5]{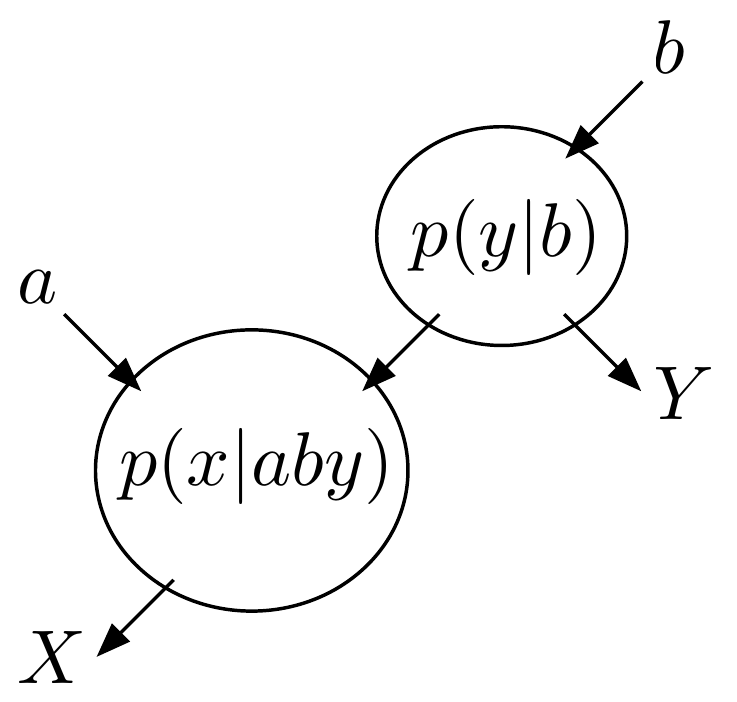}}} & 
\subfigure[\label{f:righttoleftfactorization}]{\raisebox{0cm}{\includegraphics[scale=.5]{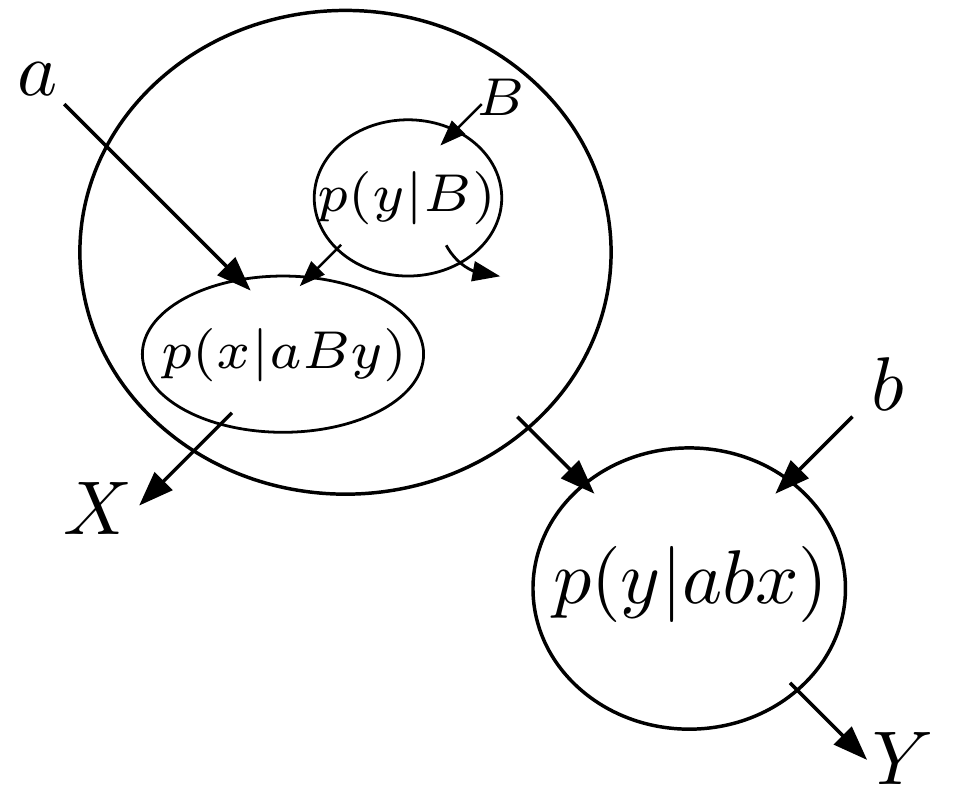}}}
\end{tabular}
\caption{
(a) A left-factorization of the non-signaling distribution $p(x, y \vert a, b)$ corresponds to \Alice\ asking her question first, and a right-factorization (b) to \Bob\ asking first.  
(c) If the internal randomness of a non-signaling resource is parameterized according to a right-factorization, this can be converted to a left-factorization by sampling from the distribution of \Bob's inputs and local randomness to determine \Alice's answer.  
} \label{f:factorizations}
\end{figure}

In the first parameterization in this example, for a fixed value of~$r$ changing \Alice's input~$a$ will change the output~$x$.  This would seem to be problematic for the preceding argument; \Alice\ could share the resource of \exampleref{t:randomnessdependenceexample} with \Charlie, and give it an input from an \Alice-\Bob\ resource.  Therefore even when $J = 2$, \Alice's outputs \emph{can} depend on the randomness from her correlations with \Bob.  This dependence is of course artificial; in the second parameterization of the example, $x$ has no dependence on~$a$.  

We will argue, then, that although \Alice's answer can depend on her correlations with \Bob, the resources can be reparameterized so that there is no such dependence.  This technical trick allows our intuitive argument to be pushed through for the case $J = \mathrm{\Charlie}$.  First, use a right-factorization for the interactions of \Alice\ and \Charlie, and a left-factorization for the \Alice-\Bob\ resources.  That is, parameterize the \Alice-\Charlie\ interactions by $s$, the input and local randomness to \Charlie.  \Charlie's outputs are a deterministic function of~$s$.  Therefore, given $s$ and \Alice's input, there is a unique valid answer for \Alice, so her answer can have little dependence on her interactions with \Bob.  Then, we switch to a left-factorization for the \Alice-\Charlie\ interactions, in which \Alice's local randomness (with respect to which her outputs are a deterministic function of her inputs) comes from \emph{sampling} $s$.  Roughly speaking, the resources on \Alice's side ``guess" \Charlie's input and randomness, and use the guess to determine \Alice's random outputs.  (It does not matter that the guess is almost certainly wrong; what matters is that this procedure generates the correct marginal distribution for \Alice's input/output transcripts.)  See \figref{f:righttoleftfactorization}.  In this left-factorization, it is still the case that \Alice's answer has little dependence on her interactions with \Bob.  

A technical problem is that \Alice\ potentially shares many non-signaling resources with \Charlie.  We want to use a left-factorization for all of them.  In the above reparameterization, based on sampling~$s$, all the resources must use the same~$s$.  They cannot make independent guesses.  This however correlates the randomness in the non-signaling resources, which the definition does not allow.\footnote{It is important to sample~$s$ at random; using a fixed value~$s^*$ would mean that \Alice's final output is constant with high probability, i.e., this would change the marginal distribution over \Alice's transcripts.}  For this reason we introduce \emph{multi-round} non-signaling resources.  All \Alice-\Charlie\ resources can be collected together into one multi-round resource, whose left-factorization samples~$s$ as above.

\section{General strategy simplification lemma}

As our tools are applicable beyond the setting of \thmref{t:distributedgame}, at this point it is appropriate to generalize.  In this section we will state and prove our main technical lemma, and in \secref{s:robustseparation} below we will apply it to Extended CHSH and Extended CHSH$_n$ games.  

For a vector $\vec v = (v_1, \ldots, v_m)$, we denote its first $j$ components by $\vec v_{1:j} = (v_1, \ldots, v_j)$, and we use $\vec v_{1:0}$ to remove the vector entirely from an expression.  

\begin{definition}
A $k$-party, multi-round non-signaling resource is a conditional probability distribution $\Pr[\vec X^{(1)}, \ldots, \vec X^{(k)} \vert \vec A^{(1)}, \ldots, \vec A^{(k)}]$ satisfying, for all $j_1, \ldots, j_k \geq 0$,  
\begin{equation*}
\Pr[\vec X^{(1)}_{1:j_1} \ldots \vec X^{(k)}_{1:j_k} \vert \vec A^{(1)} \ldots \vec A^{(k)}]
= 
\Pr[\vec X^{(1)}_{1:j_1} \ldots \vec X^{(k)}_{1:j_k} \vert \vec A^{(1)}_{1:j_1} \ldots \vec A^{(k)}_{1:j_k}]
 \enspace .
\end{equation*}
\end{definition}

This definition has two intuitive implications.  First is causality: the distribution of the outputs should only depend on the inputs already given, e.g., $\Pr[\vec X^{(1)}_{1:j_1} \vert \vec A^{(1)}] = \Pr[\vec X^{(1)}_{1:j_1} \vert \vec A^{(1)}_{1:j_1}]$.  Second is non-signaling: it does not matter in what order inputs are given.  For example, player~$k$'s inputs cannot change the marginal distribution of the other players' outputs, $\Pr[\vec X^{(1)}_{1:j_1} \ldots \vec X^{(k-1)}_{1:j_{k-1}} \vert \vec A^{(1)} \ldots \vec A^{(k)}] = \Pr[\vec X^{(1)}_{1:j_1} \ldots \vec X^{(k-1)}_{1:j_{k-1}} \vert \vec A^{(1)}_{1:j_1} \ldots \vec A^{(k-1)}_{1:j_{k-1}}]$; this follows by setting $j_k = 0$ in the definition, and intuitively corresponds to player~$k$ going last.  See \figref{f:multiroundns}.  

In general, multiple multi-round non-signaling resources between the same set of players can be grouped together into one.  This is useful because, unlike in the proofs of~\cite{BarrettPironio05nonsignaling, AnshuMhalla12NSgraph, HoyerMhallaPerdrix16graphgames} we cannot here eliminate one non-signaling resource at a time, because the players' strategy might use an unbounded number of resources.  Instead, we will use the non-signaling property to eliminate, all at once, all uses of non-signaling resources involving a given subset of $k$ players.  

\begin{figure}
\centering
\begin{tabular}{c@{$\qquad\qquad\qquad$}c}
\subfigure[]{\raisebox{.2cm}{\includegraphics[scale=.393]{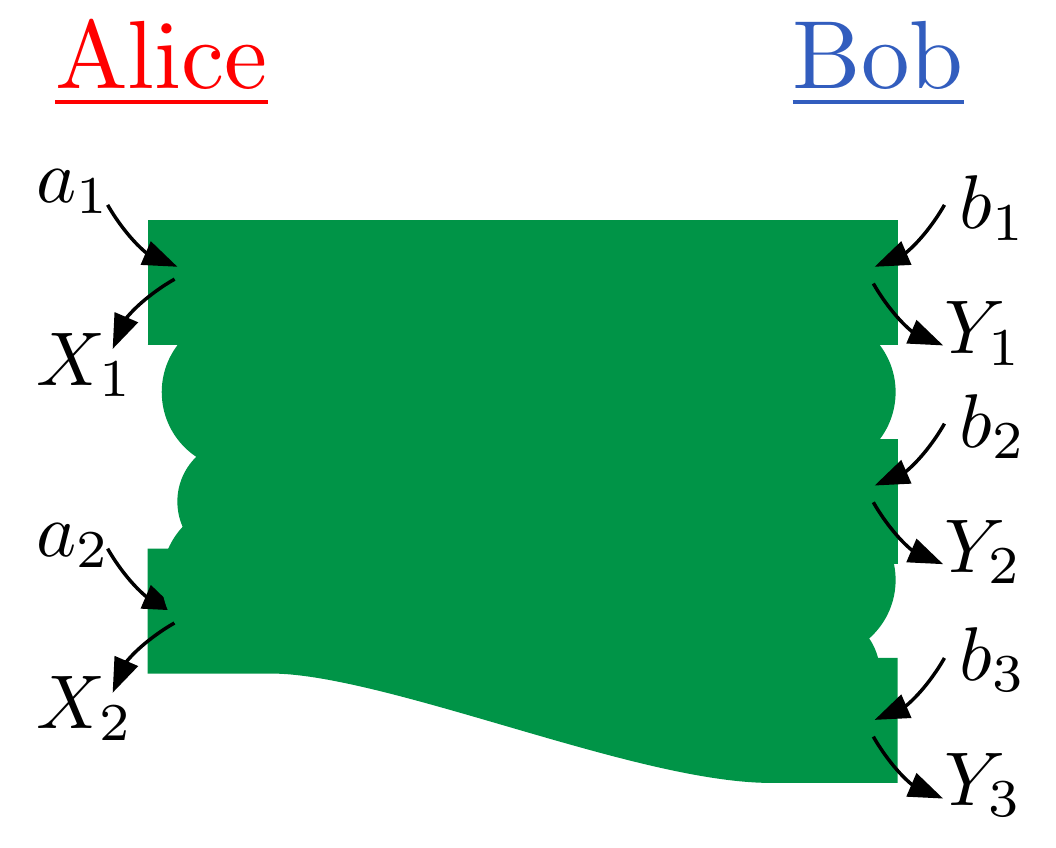}}} & 
\subfigure[]{\raisebox{.66cm}{\includegraphics[scale=.393]{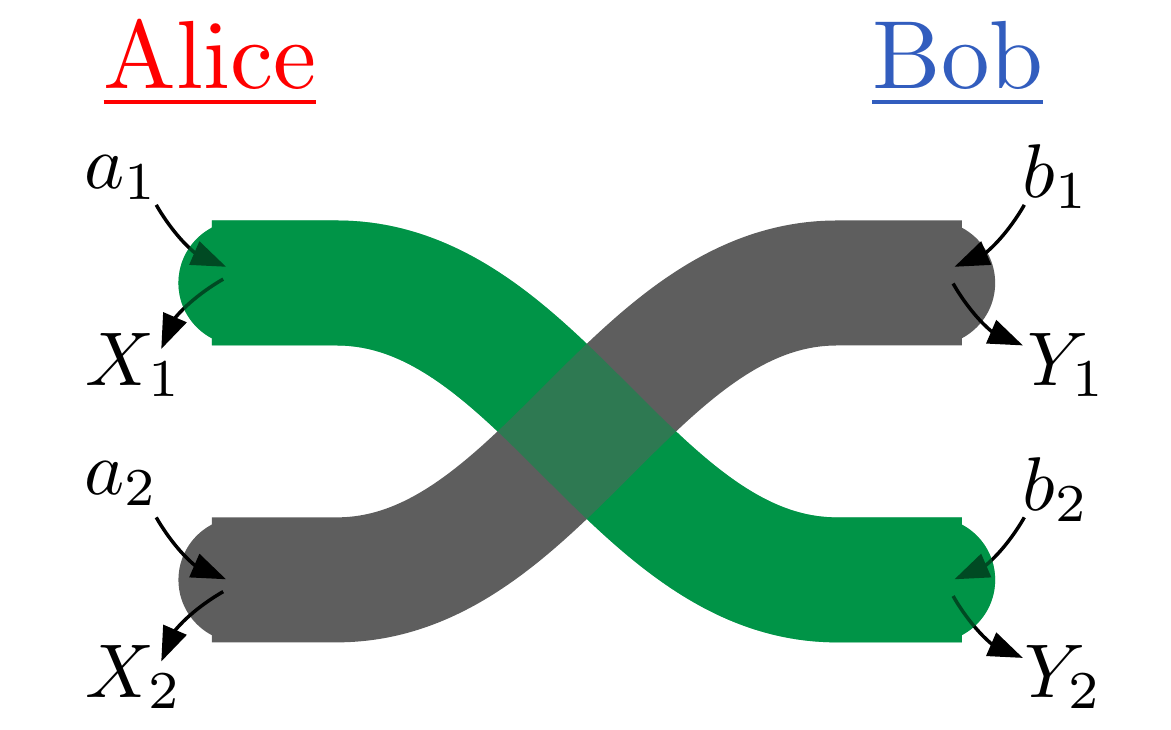}}} 
\end{tabular}
\caption{
(a) A multi-round non-signaling resource allows the players to make sequential, asynchronous queries.  Although the responses can be correlated, the correlations do not allow for signaling; the marginal distribution of \Alice's outputs is independent of \Bob's inputs, and vice versa.  
(b) For example, \Alice\ and \Bob\ might share two nonlocal boxes, but query them in opposite orders, so the outputs satisfy $X_1 \oplus Y_2 = a_1 b_2$ and $X_2 \oplus Y_1 = a_2 b_1$.  
} \label{f:multiroundns}
\end{figure}

\medskip

In general, a nonlocal game involves some set of players, who can agree in advance on a strategy but cannot later communicate with each other.  The verifier chooses from some distribution a question, consisting of a set of inputs to all or some of the players, and sends each player its input.  The players respond, and the verifier applies a predicate to decide whether to accept.  The game is ``unique" if for any question and any player~$v$ involved in the verifier's acceptance predicate, for any fixed responses from the other involved players there is exactly one response for~$v$ so that the verifier accepts.  

\begin{lemma} \label{t:fixrandomnessgeneral}
Consider a unique nonlocal game.  Let $\S$ be a strategy, for classical players using non-signaling resources, that wins with probability at least $1 - \epsilon$ on all questions.  Fix a player~$v$ and a non-signaling resource~$R$ involving~$v$, and assume that there exists a question~$\Q$ such that the verifier's acceptance predicate depends only on the responses of~$v$ and players $U$ not involved in~$R$.  Assume further that for any other question~$\Q'$ in which $v$'s input is the same as in $\Q$, for any player $u \in U$ either $u$'s input is the same as in $\Q$ or the verifier's acceptance predicate does not depend on $u$'s response.  

Then there exists a strategy~$\S'$ that wins with probability at least $1 - 3 \epsilon$ on all questions, and that is the same as~$\S$ except for $v$'s behavior on its input in question~$\Q$; on this input, $v$ ignores the resource~$R$.  
(In fact, $\Pr[\text{\emph{$\S'$ loses}} \vert \Q] \leq \Pr[\text{\emph{$\S$ loses}} \vert \Q]$ and $\Pr[\text{\emph{$\S'$ loses}} \vert \Q'] \leq \Pr[\text{\emph{$\S$ loses}} \vert \Q'] + 2 \Pr[\text{\emph{$\S$ loses}} \vert \Q]$ for any $\Q'$ in which $v$'s input is the same as in~$\Q$.)  
\end{lemma}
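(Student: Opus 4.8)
The plan is to modify $\S$ into $\S'$ by having the distinguished player $v$, on its input in $\Q$ (call it $a_v$), compute its answer from a fresh, independent simulation of the resource $R$ rather than from $R$ itself, and to show via a coupling that this perturbs the winning probability by the claimed amounts. The engine is \emph{uniqueness} together with \emph{non-signaling}: on $\Q$ the acceptance predicate depends only on $v$ and the players $U$, none of whom touch $R$, so by uniqueness there is a single correct response $c$ for $v$ determined entirely by the responses of $U$. Since $U$ is disjoint from $R$, I want $c$ to be independent of $R$'s internal randomness, so that resampling $R$ on $v$'s side cannot move the target $c$.

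First I would set up the factorizations. Group all resources shared between $v$ and the other parties $W$ of $R$ into one multi-round non-signaling resource, again called $R$, and take its left-factorization with respect to $v$: the whole sequence of $R$-outputs seen by $v$ is a function of $v$'s $R$-inputs and some $v$-side randomness $\rho$, whose distribution depends only on $v$'s inputs. Simultaneously, right-factorize every resource in which $v$ participates \emph{other than} $R$, so the opposite party samples first; by non-signaling this makes the outputs of every player except $v$ (and the $W$-parties) a function of that player's own inputs and randomness alone, invariant to $v$'s inputs. In particular the responses of $U$, hence the correct answer $c$, depend only on $U$'s local data and are independent of $\rho$.

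Now define $\S'$: on input $a_v$, player $v$ still issues to the real $R$ exactly the queries it would in $\S$—so the $W$-parties' outputs, and everything downstream of them, are untouched—but it computes its own answer, and feeds its other resources, from an independent fresh copy $\rho'$ of the $v$-side $R$-transcript drawn from the left-factorization. Couple $\S$ and $\S'$ by sharing all randomness except that $\S$ uses $\rho$ while $\S'$ uses $\rho'$ for $v$'s answer. Under this coupling the $W$-parties are identical (their real queries are unchanged), and hence every other player's responses are identical too: right-factorization makes their outputs invariant to $v$'s inputs, and their interactions with the $W$-parties are undisturbed. Here the $\Q'$-hypothesis enters, guaranteeing that the responses of $U$ that the verifier actually consults on a question $\Q'$ are the same function of $U$'s data as on $\Q$. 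Thus $\S$ and $\S'$ differ only in $v$'s answer on input $a_v$. Writing $\delta = \Pr[\text{$\S$ loses}\mid\Q]$, acceptance on $\Q$ is $X_v = c$, so $X_v = c$ with probability $1-\delta$ under $\rho$; as $c \perp \rho$ and $\rho' \sim \rho$, also $X_v' = c$ with probability $1-\delta$, whence the two answers agree with probability $\geq 1-2\delta$.

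The two bounds then fall out. On $\Q$ the verifier checks $X_v = c$, and $v$'s answer equals $c$ with the same probability $1-\delta$ whether computed from $\rho$ or $\rho'$, giving $\Pr[\text{$\S'$ loses}\mid\Q] = \Pr[\text{$\S$ loses}\mid\Q]$. On any $\Q'$ in which $v$'s input is $a_v$, whenever $v$'s two answers agree all players' responses agree, so the loss events differ only on the disagreement event, of probability $\leq 2\delta$; hence $\Pr[\text{$\S'$ loses}\mid\Q'] \leq \Pr[\text{$\S$ loses}\mid\Q'] + 2\delta$. For questions on which $v$'s input is not $a_v$, $\S'$ agrees with $\S$ and loses with probability $\leq \epsilon$; combined with $\delta \leq \epsilon$ and the two displayed inequalities, every question is lost with probability at most $\max\{\epsilon,\ \epsilon + 2\epsilon\} = 3\epsilon$, giving the $1-3\epsilon$ guarantee. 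I expect the genuine difficulty to lie entirely in the bookkeeping of the third paragraph for the \emph{multi-round} case: justifying that $v$ may replay its real $R$-queries while basing its answer on an independent identically-distributed phantom transcript requires the multi-round non-signaling property, so that the $v$-side marginal transcript is well-defined and resamplable without disturbing $W$; and one must verify that feeding $v$'s other resources from the phantom transcript perturbs no other player's responses, which is exactly what right-factorizing $v$'s non-$R$ resources secures. The footnote's example of chaining one resource's output into another is precisely the trap this bookkeeping is built to avoid.
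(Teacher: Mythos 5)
Your overall architecture matches the paper's: you route the comparison through the unique correct answer on $\Q$ determined by $U$, pay $2\Pr[\text{$\S$ loses}\mid\Q]$ for the two detours through it, and use non-signaling to transfer between $\S$ and $\S'$. (Resampling a fresh $\rho'$ instead of fixing a single good value $r_v^*$ by averaging, as the paper does, is an immaterial difference; both give the $\Q$ bound.) But the step your argument actually rests on --- a pathwise coupling under which ``every other player's responses are identical'' in $\S$ and $\S'$ --- is not justified, and I don't believe it can be arranged. The players of $R$ other than $v$ (call them $W$, as the paper does) are the problem. Once you left-factorize $R$ with respect to $v$ so that $v$'s $R$-transcript is a function of purely local, resamplable randomness $\rho$, the $W$-side outputs of $R$ necessarily become functions of $v$'s inputs to $R$: no parameterization makes both sides of a genuinely nonlocal resource simultaneously pathwise invariant to the other side's queries, since that would make $R$ a product of local boxes. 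Your patch --- ``$v$ still issues to the real $R$ exactly the queries it would in $\S$'' --- is not implementable, because $v$'s $\S$-queries to $R$ depend adaptively on responses of its other resources to queries that $\S'$ never makes; and even if it were, the $W$-outputs would then depend on $\rho$, so on $\Q'$ the correct answer for $v$ in the $\S'$-world would differ pathwise from that in the $\S$-world even on the event that $v$'s two answers agree. Concretely, ``$\S$ wins on $\Q'$'' is the event $v(\rho)=W_{\Q'}$ and ``$\S'$ wins on $\Q'$'' is $v(\rho')=W'_{\Q'}$; your argument controls $\Pr[v(\rho)\neq v(\rho')]$ but says nothing about $\Pr[W_{\Q'}\neq W'_{\Q'}]$, which is exactly where the content lies.

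The paper closes this gap by never making a pathwise claim about $W$. It observes only that changing $v$'s behavior cannot alter the \emph{joint distribution} of the answers of $U$ and $W$ together, so $\Pr[U\neq W'_{\Q'}]=\Pr[U\neq W_{\Q'}]$, and then bounds $\Pr[U\neq W_{\Q'}]\leq\Pr[U\neq v]+\Pr[v\neq W_{\Q'}]$ entirely inside the $\S$-world, where all quantities live on one probability space. Combined with $\Pr[v^*\neq U]\leq\Pr[\text{$\S$ loses}\mid\Q]$ this yields the same $2\Pr[\text{$\S$ loses}\mid\Q]+\Pr[\text{$\S$ loses}\mid\Q']$ bound you are after. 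If you replace your coupling step with this distributional identity plus triangle inequality, the rest of your write-up goes through.
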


\begin{proof}
Let $U$ be the set of players, aside from~$v$, upon whose answers the verifier's acceptance predicate depends for question~$\Q$.  Let $W$ be the set of all players aside from $U$ and~$v$.  

Call a question $\Q'$ \emph{$v$-compatible} with $\Q$ if either $v$'s input differs from $\Q$ to $\Q'$ (so $\S$ and $\S'$ will be the same on $\Q'$), or for any player $u \in U$, either $u$'s input is the same as in $\Q$ or the verifier's acceptance predicate does not depend on $u$'s response.  By assumption all questions chosen with positive probability are $v$-compatible with~$\Q$.  

Parameterize the randomness for the non-signaling resources according to the players in~$U$ going first, then $v$, then $W$: 
\begin{enumerate}[(a)]
\item 
Let $r_U$ denote the randomness for $U$.  It fixes the answers of the players in~$U$ to~$\Q$.  Let $U(r_U)$ denote the unique answer for~$v$ for which the verifier accepts.  
\item 
Let $r_{Uv}$ denote the remaining randomness needed to determine the outputs at~$v$ of any resources that include $v$ and a player or players in~$U$ (these resources might also include players in~$W$).  Let $r_v$ denote the randomness needed to determine the outputs at~$v$ of any resources that include~$v$ but no players in~$U$.  Then $v$'s answer to its input~$\Q_v$ in $\Q$ is a deterministic function $v(r_U, r_{Uv}, r_v)$.  
\item 
Let $r_W$ denote all the remaining randomness for non-signaling resources involving~$W$; this includes resources that cross from $U$ and/or $v$ to $W$ as well as any other resources involving~$W$.  

For a $v$-compatible question $\Q'$, let $W_{\Q'}(r_U, r_{Uv}, r_v, r_W)$ denote the unique answer for $v$ for which the verifier accepts.  It is a deterministic function of the randomness that we have defined, where the inputs to $U$ and $v$ are given by $\Q$ and the inputs to $W$ given by $\Q'$.  
\end{enumerate}

Let $\chi_P$ denote the indicator function for a predicate~$P$; $\chi_P = 1$ if $P$ is true and $\chi_P = 0$ otherwise.  

Consider the question $\Q$.  Strategy $\S$ wins with probability at least $1-\epsilon$ on this question, i.e., with the above parameterization, 
\begin{equation} \label{e:uwinprobgeneral}
1 - \epsilon 
\leq 
\sum_{r_U, r_{Uv}, r_v} p(r_U) p(r_{Uv}) p(r_v) \cdot \chi_{U(r_U) = v(r_U, r_{Uv}, r_v)}
\end{equation}
In particular, there exists a fixed value $r_v^*$ such that 
\begin{equation} \label{e:uwinprobfixedrgeneral}
1 - \epsilon 
\leq 
\sum_{r_U, r_{Uv}} p(r_U) p(r_{Uv}) \cdot \chi_{U(r_U) = v(r_U, r_{Uv}, r_v^*)}
\end{equation}
Define strategy $\S'$ using $r_v^*$: on input $Z$, $v$ does not use any resource that does not also involve~$U$; for any such resource, $v$ instead simulates its input/output behavior using the fixed $r_v^*$.  

To prove the lemma, we need to lower bound the success probability of~$\S'$ on question $\Q$, and on other $v$-compatible questions.  

\medskip

1. For question $\Q$, the right-hand side of Eq.~\eqnref{e:uwinprobfixedrgeneral} is exactly the probability that $\S'$ wins; indeed it is at least $1 - \epsilon$.  

\medskip

2. Consider a $v$-compatible question $\Q'$.  For strategy $\S$, let $W_{\Q'}(r_U, r_{Uv}, r_v, r_W)$ denote the unique answer for $v$ for which the verifier accepts, as a deterministic function of the randomness that we have defined, where the inputs to $U$ are given by $\Q$.  Similarly define $W_{\Q'}'(r_U, r_{Uv}, r_v, r_W)$ for strategy~$\S'$.  Since $v$ behaves differently in $\S'$ (in particular ignoring $r_v$), it is important to recognize that $W_{\Q'}'$ could be very different from $W_{\Q'}$.  $W_{\Q'}'$ depends on both $r_v$ and~$r_v^*$.  

The probabilities that the strategies win satisfy 
\begin{align}
1 - \epsilon \leq \Pr[\text{$\S$ wins} \vert \Q'] 
&= \sum_{\substack{r_U, r_{Uv} \\ r_v, r_W}} p(r_U) p(r_{Uv}) p(r_v) p(r_W) \cdot \chi_{v(r_U, r_{Uv}, r_v) = W_{\Q'}(r_U, r_{Uv}, r_v, r_W)} \nonumber \\
&= \Pr[v = W_{\Q'}] \label{e:uwinprobWgeneral} \\
\Pr[\text{$\S'$ wins} \vert \Q'] 
&= \Pr[v^* = W_{\Q'}'] \nonumber
 \enspace .
\end{align}
We use the shorthand $v = v(r_U, r_{Uv}, r_v)$, $v^* = v(r_U, r_{Uv}, r_v^*)$, $W_{\Q'} = W_{\Q'}(r_U, r_{Uv}, r_v, r_W)$, $W_{\Q'}' = W_{\Q'}'(r_U, r_{Uv}, r_v, r_W)$ and $U = U(r_U)$.  

From $\chi_{v^* \neq W_{\Q'}'} \leq \chi_{v^* \neq U} + \chi_{U \neq W_{\Q'}'}$, we bound 
\begin{align*}
\Pr[\text{$\S'$ loses} \vert \Q']
&= \Pr[v^* \neq W_{\Q'}'] \\
&\leq \Pr[v^* \neq U] + \Pr[U \neq W_{\Q'}']
 \enspace .
\end{align*}
By the no-signaling property, $v$'s different actions in the strategy $\S'$ cannot affect the joint distributions of the players in $U$ and~$W$ together.  Therefore $\Pr[U \neq W_{\Q'}'] = \Pr[U \neq W_{\Q'}]$.  (This is the key observation in the proof.)  We conclude 
\begin{align*}
\Pr[\text{$\S'$ loses} \vert \Q']
&\leq \Pr[v^* \neq U] + \Pr[U \neq W_{\Q'}] \\
&\leq \Pr[v^* \neq U] + \Pr[U \neq v] + \Pr[v \neq W_{\Q'}] \\
&= \Pr[\text{$\S'$ loses} \vert \Q] + \Pr[\text{$\S$ loses} \vert \Q] + \Pr[\text{$\S$ loses} \vert \Q'] \\
&\leq 3 \epsilon
 \enspace ,
\end{align*}
where the last steps use Eqs.~\eqnref{e:uwinprobfixedrgeneral}, \eqnref{e:uwinprobgeneral} and~\eqnref{e:uwinprobWgeneral}, respectively, to bound the three terms.  
\end{proof}

\section{Robust separation argument} \label{s:robustseparation}

We will now apply \lemref{t:fixrandomnessgeneral} to prove an upper bound on the probability that classical players sharing non-signaling resources can win an extended game.  We give the proofs in parallel for the $\mathrm{CHSH} + k$ game (\secref{s:extendedchshgame}) and the $\mathrm{CHSH}_n + k$ game (\appref{s:extendedchshngame}); the latter gives a larger separation for $k > 1$.  

\begin{theorem} \label{t:robustseparationchsh}
For any $k$, classical players sharing arbitrary $(k+1)$-party non-signaling resources can win the $\mathrm{CHSH} + k$ game with probability at most $1 - \frac{1}{2 (3^k + 1)} < \valuequantum(\mathrm{CHSH} + k)$.  
\end{theorem}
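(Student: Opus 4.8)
The plan is to convert any winning classical non-signaling strategy, at a controlled cost, into one in which \Alice's response to input $A=0$ is \emph{deterministic}, and then to finish with \propref{t:chshdeterministica0}. Throughout I track the per-question losses rather than a single uniform bound, using the sharp additive guarantees stated parenthetically in \lemref{t:fixrandomnessgeneral}.

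Starting from a strategy $\S$, first regroup the non-signaling resources touching \Alice, using multi-round resources, into bundles indexed by an omitted player: let $R_B$ collect every resource that involves \Alice\ together with all of $\Charliej{1},\dots,\Charliej{k}$ (such a resource has $k+1$ parties and so must omit \Bob), and assign every other \Alice-resource, each of which omits some \Charliej{j}, to a bundle $R_{C_j}$. Now apply \lemref{t:fixrandomnessgeneral} once for each $J\in[k]$, taking $v=\Alice$, the question $\Q$ to be the consistency question with $A=0$ and $C_J=0$, and $R=R_{C_J}$. The hypotheses hold: on $\Q$ the predicate $X=Z_J$ involves only \Alice\ and $U=\{\Charliej{J}\}$, which is disjoint from $R_{C_J}$; and every other question with \Alice's input $A=0$ is $v$-compatible, because in it \Charliej{J}\ either receives no input or is irrelevant to the predicate. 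After these $k$ applications \Alice's $A=0$ response depends on no bundle except $R_B$. Since each application inflates the relevant losses by a bounded factor, the accumulated blow-up is about $3^k$.

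The delicate stage is removing $R_B$, and here \lemref{t:fixrandomnessgeneral} does \emph{not} apply directly: the natural choice $\Q=$ a game question with $A=0$, whose predicate $X=Y$ involves \Alice\ and $U=\{\Bob\}$ disjoint from $R_B$, fails $v$-compatibility, since the other game question with $A=0$ feeds \Bob\ a different input $B$ on which the predicate $X=Y$ still depends. The resolution uses that $R_B$ \emph{excludes} \Bob. After the previous stage \Alice's output $X$ on $A=0$ is a function of the randomness of $R_B$ alone, while \Bob's output $Y$ on a game question is a function only of resources that involve \Bob; as no resource may span all $k+2$ players, these two pieces of randomness are independent, so $X\perp Y$. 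For independent bits $\Pr[X=Y]\le\max_b\Pr[X=b]$, so a high winning probability on the $A=0$ game question forces $X$ to be nearly constant. I then replace \Alice's $A=0$ response by that constant, altering the strategy only on this one input and at a cost equal to the small deviation probability.

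With \Alice's $A=0$ answer now fixed, pass to the CHSH embedding from \secref{s:extendedchshgame}: grouping $\{\Alice,\Charliej{1},\dots,\Charliej{k}\}$ into one player whose output is $X$ on $A=0$ and $X\oplus Z_1\oplus\cdots\oplus Z_k$ on $A=1$ turns the game questions into a genuine CHSH game against \Bob. This game is non-signaling (a grouping of non-signaling resources) and has the grouped player's $A=0$ response fixed, so \propref{t:chshdeterministica0} caps its value at $\valueclassical(\mathrm{CHSH})=3/4$. Because every modification above touched only \Alice's $A=0$ behavior, the game questions with $A=1$ were left exactly as in $\S$; relating the value cap back to $\S$ therefore lower-bounds the game-question loss of $\S$ in terms of the consistency-question losses that were spent driving the reduction. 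Balancing that expenditure---which carries the $3^k$ factor---against the $3/4$ cap is exactly what the choice $q=1-2/(3^k+1)$ optimizes (\claimref{t:markov}), and it yields the overall bound $1-\tfrac{1}{2(3^k+1)}$, strictly below $\valuequantum(\mathrm{CHSH}+k)$ from \propref{t:distributedgamequantumvalue}. The main obstacle is precisely this second stage: seeing that the general lemma fails for the $A=0$ game question, and that the omitted-player structure of $R_B$ provides the independence that rescues the argument.
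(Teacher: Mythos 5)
You follow the paper's architecture exactly: bundle the correlations into $k+2$ multi-round non-signaling resources indexed by the omitted player, strip \Alice's $A=0$ dependence on $\cR_\Cj{1},\ldots,\cR_\Cj{k}$ by $k$ applications of \lemref{t:fixrandomnessgeneral} at the consistency questions (accumulating the $3^k$ blow-up), handle the \Bob-omitting bundle by an independence argument that the lemma cannot supply, and close with \propref{t:chshdeterministica0} and the balancing of \claimref{t:markov}. Your diagnosis of why the lemma fails for $R_B$ (the two $A=0$ game questions give \Bob\ different inputs on which the predicate $X=Y$ still depends) is exactly the right observation, and the independence of \Alice's $A=0$ output from \Bob's output is the correct rescue.

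The gap is in how you cash in that independence. You fix \Alice's answer to the \emph{majority value of $X$} and charge ``the deviation probability'' as the cost. That cost is real and too large: writing $x=\Pr[X=c]$ and $y=\Pr[Y=c]$, one has $\Pr[X=Y]-\Pr[c=Y]=(1-x)(1-2y)$, which is positive whenever $y<1/2$ and is bounded only by the deviation probability $1-x\le\delta$. So your modified strategy can lose up to $2\delta$ on the $A=0$ game questions rather than $\delta$. Propagating this extra factor of $2$ through \claimref{t:markov} yields only $\Pr[\text{$\S$ loses}]\ge\tfrac{1}{4(3^k+1)}$; since the quantum strategy of \propref{t:distributedgamequantumvalue} loses $(1-q)\sin^2\tfrac\pi8=\tfrac{2\sin^2(\pi/8)}{3^k+1}\approx\tfrac{0.293}{3^k+1}$, the bound $\tfrac{0.25}{3^k+1}$ sits \emph{above} the quantum loss and establishes no separation at all, let alone the stated constant. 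The repair is the paper's choice: pick the constant $c$ to maximize $\Pr[Y=c]$ (\Bob's marginal), not $\Pr[X=c]$; then $\Pr[c=Y]=\max_b\Pr[Y=b]\ge xy+(1-x)(1-y)=\Pr[X=Y]$ and the replacement is free. Two smaller points in the same step: if the players carry ordinary shared randomness $z$ (a convex combination of strategies), $X$ and $Y$ are independent only conditioned on $z$, so the constant must be chosen per value of $z$, with \propref{t:chshdeterministica0} applied for each fixed $z$; and to obtain the clean coefficient $(3^k-1)$ on the consistency loss you need to average over the $k$ cyclic orderings of the eliminations, since a single ordering weights $\Q_1$ by $2\cdot 3^{k-1}$ rather than $(3^k-1)/k$.
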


\begin{theorem} \label{t:robustseparationchshn}
For any $k$, classical players sharing arbitrary $(k+1)$-party non-signaling resources can win the $\mathrm{CHSH}_n + k$ game with probability at most $1 - \frac{1}{2 (2 n + 3^k - 2)} < \valuequantum(\mathrm{CHSH}_n + k)$.  
\end{theorem}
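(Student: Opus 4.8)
The plan is to prove \thmref{t:robustseparationchsh} and \thmref{t:robustseparationchshn} together, reducing each extended game to a two-party statement about its base game, to which \propref{t:chshdeterministica0} (or, for $\mathrm{CHSH}_n + k$, its $\mathrm{CHSH}_n$ analogue) applies. I describe the argument for the concrete $\mathrm{CHSH} + k$ game; the $\mathrm{CHSH}_n + k$ case is identical except for the final base-game bound and the numerical optimization. The crucial feature is that on input $A = 0$ \Alice\ cannot distinguish a consistency question from a game question, and in either case her winning answer is pinned down by exactly one other player. The goal is to use \lemref{t:fixrandomnessgeneral} to free \Alice's response to $A = 0$ from every resource that involves \Bob, after which \Alice\ and $\text{\Charliej{1}}, \ldots, \text{\Charliej{k}}$ can be merged into a single party playing an ordinary base game against \Bob.

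First I would group resources: using the multi-round formalism, collect for each subset of players all non-signaling resources on that subset into one multi-round resource, leaving only finitely many. Since the resources are $(k+1)$-local among $k+2$ players, each one omits at least one player. I then strip \Alice's dependence in $k$ rounds. In round $i$ I apply \lemref{t:fixrandomnessgeneral} to player \Alice, with $\Q$ the consistency question of index $J_i$ (the $J_i$ forming a permutation of $[k]$), and with $R$ the merged resource holding all surviving \Alice-resources that omit \Charliej{J_i}. The hypotheses hold: the predicate for $\Q$ depends only on \Alice\ and on \Charliej{J_i}, which is not in $R$; and every other question with $A = 0$ is \Alice-compatible, because a consistency question of a different index, and a game question with $A = 0$, both have acceptance predicates that do not involve \Charliej{J_i}'s response. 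After round $i$, \Alice's $A = 0$ response depends only on resources joining her to \Charliej{J_1} through \Charliej{J_i}; after all $k$ rounds it depends only on the single resource joining her to all of $\text{\Charliej{1}}, \ldots, \text{\Charliej{k}}$, which being $(k+1)$-party cannot involve \Bob. Thus \Alice's $A = 0$ output is now independent of every resource touching \Bob.

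Now I would merge \Alice\ and the \Charlie's into a super-player $A'$, turning the game questions into exactly a base CHSH game between $A'$ and \Bob\ played with arbitrary two-party $A'$-\Bob\ non-signaling resources (non-signaling is preserved under grouping players). Because \Alice's $A = 0$ output is a function of $A'$-local randomness alone, conditioning on that randomness fixes $A'$'s answer on input $0$, so \propref{t:chshdeterministica0}, applied to each fixing and averaged (the two fixed values being symmetric), bounds the reduced strategy's game-question success by $\valueclassical(\mathrm{CHSH}) = 3/4$. To propagate this back, note each round raises the loss on an $A = 0$ question by at most twice the loss on the consistency question it uses, so $A = 0$ losses at worst triple per round; unwinding the recursion bounds the reduced game loss by the original game loss plus a coefficient-$\frac{3^k - 1}{2}$ combination of the original consistency losses. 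Pairing this with the CHSH bound ``reduced game loss $\ge 1/4$'' gives one linear inequality in the original strategy's average consistency and game losses; minimizing the overall loss $q \cdot (\text{avg.\ consistency loss}) + (1 - q) \cdot (\text{avg.\ game loss})$ subject to it is a two-variable linear program whose optimum, at the balancing value $q = 1 - 2/(3^k + 1)$ that is exactly the choice made in \claimref{t:markov}, equals $\frac{1}{2(3^k + 1)}$.

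The main obstacle is precisely this accounting. \lemref{t:fixrandomnessgeneral} is a per-question guarantee with a $1 - 3\epsilon$ degradation, and the delicate part is to accumulate $k$ such degradations into a clean factor $3^k$ rather than something like $3^k \cdot k$---this is where spreading the consistency weight evenly and choosing the permutation order matter---and then to convert the per-question bounds into a bound on the \emph{overall} winning probability through the weight $q$. Verifying the \Alice-compatibility hypotheses and the averaged application of \propref{t:chshdeterministica0} are routine once the reduction is in place. For \thmref{t:robustseparationchshn} the $k$-round reduction and the $3^k$ tripling are unchanged; only the base game differs, so the $\mathrm{CHSH}_n$ analogue of \propref{t:chshdeterministica0} (from \appref{s:extendedchshngame}) replaces the $1/4$ by a $2n$-dependent quantity, and re-optimizing $q$ yields the denominator $2n + 3^k - 2$.
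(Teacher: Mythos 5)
Your architecture is the paper's: group the correlations into $k+2$ multi-round resources indexed by the omitted player, apply \lemref{t:fixrandomnessgeneral} once per consistency question to make \Alice's $A=0$ response ignore every resource except the one omitting \Bob, average over the $k$ orderings to turn the $2(3^{k-1},\ldots,3^0)$ coefficients into $\frac{3^k-1}{k}$ each, make \Alice's $A=0$ answer deterministic, and balance game against consistency losses with $q$ chosen exactly as in \claimref{t:markov}. For $\mathrm{CHSH}+k$ your accounting and the resulting $\frac{1}{2(3^k+1)}$ are correct, and your variant of the last step (conditioning on the \Bob-free resource's randomness rather than the paper's majority-output $c_z$ trick) is a legitimate alternative.

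The gap is in the theorem you were actually asked to prove. The $\mathrm{CHSH}_n+k$ endgame is \emph{not} ``identical except for the final base-game bound'': the game question's acceptance predicate contains the escape clause ``accept if $Z\neq S$'' where $S=\chi_{A<0}$, so roughly half of the game questions are automatic wins and \propref{t:chshndeterministica0} cannot be applied directly to the full game-question distribution -- doing so would assert a loss of $\tfrac{1}{2n}$ on questions the players win for free, yielding a bound stronger than the true one. The paper's extra step is to fix all shared randomness (legitimate because on game questions the \Charlie\ inputs are constant, so their contribution is simulable by shared randomness), observe that this determines $Z\in\{0,1\}$, and note that only one of the two embedded $\mathrm{CHSH}_n$ games -- the one on $A\geq 0$ if $Z=0$, on $A\leq 0$ if $Z=1$ -- is live, the other half being auto-won. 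Applying \propref{t:chshndeterministica0} to the live half and weighting by $\Pr[A\geq 0\vert\text{game}]=\frac{2n-1}{2n}\cdot\frac{n}{2n-1}\cdot\ldots$, i.e.\ by $\frac{n}{2n-1}$, gives game-question loss at least $\frac{1}{4n-2}$ rather than $\frac{1}{2n}$; this factor, together with $p=\Pr[A=0\vert\text{game}]=\frac{1}{2n-1}$ in \claimref{t:markov}, is what produces the denominator $2(2n+3^k-2)$. Your proposal names the right target but supplies no mechanism for the $Z$/sign splitting, which is the one genuinely new idea in the $\mathrm{CHSH}_n+k$ case.
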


\noindent
In particular, using \propref{t:distributedgamequantumvalue}, $\valuequantum(\mathrm{CHSH} + k) - (1 - \frac{1}{2(3^k+1)}) \geq \frac{\sqrt 2 - 1}{2 (3^k + 1)}$.  
Explicit lower bounds on the quantum versus non-signaling gap are listed in \figref{f:klocalgaps}.  

\begin{figure}
\centering
\begin{tabular}{c @{\;}|@{\;\;} c @{\quad} c @{\;} l}
\hline \hline
$k$ & CHSH gap & \multicolumn{2}{c}{Best CHSH$_n$ gap$\qquad$} \\
\hline
1 & $5.178 \cdot 10^{-2}$ & $4.272 \cdot 10^{-2}$ & (with $n = 3$) \\
2 & $2.071 \cdot 10^{-2}$ & $2.318 \cdot 10^{-2}$ & ($n = 4$) \\
3 & $7.397 \cdot 10^{-3}$ & $1.079 \cdot 10^{-2}$ & ($n = 5$) \\
4 & $2.526 \cdot 10^{-3}$ & $4.454 \cdot 10^{-3}$ & ($n = 8$) \\
5 & $8.488 \cdot 10^{-4}$ & $1.695 \cdot 10^{-3}$ & ($n = 13$) \\
6 & $2.837 \cdot 10^{-4}$ & $6.122 \cdot 10^{-4}$ & ($n = 22$) \\
\hline \hline
\end{tabular}
$\quad$
\begin{tabular}{c @{\;}|@{\;\;} c @{\quad} c @{\;} l}
\hline \hline
$k$ & CHSH gap & \multicolumn{2}{c}{Best CHSH$_n$ gap$\qquad$} \\
\hline
7 & $9.466 \cdot 10^{-5}$ & $2.140 \cdot 10^{-4}$ & (with $n = 38$) \\
8 & $3.156 \cdot 10^{-5}$ & $7.333 \cdot 10^{-5}$ & ($n = 65$) \\
9 & $1.052 \cdot 10^{-5}$ & $2.484 \cdot 10^{-5}$ & ($n = 111$) \\
10 & $3.507 \cdot 10^{-6}$ & $8.359 \cdot 10^{-6}$ & ($n = 192$) \\
11 & $1.169 \cdot 10^{-6}$ & $2.802 \cdot 10^{-6}$ & ($n = 332$) \\
12 & $3.897 \cdot 10^{-7}$ & $9.368 \cdot 10^{-7}$ & ($n = 574$) \\
\hline \hline
\end{tabular}
\caption{Gap lower bound between the quantum and $(k+1)$-local non-signaling strategies for the $\mathrm{CHSH} + k$ and $\mathrm{CHSH}_n + k$ games.} \label{f:klocalgaps}
\end{figure}

\newcommand{\cR}{{\mathcal R}}

\begin{proof}[Proof of Theorems~\ref{t:robustseparationchsh} and~\ref{t:robustseparationchshn}]
The proofs for $\mathrm{CHSH} + k$ and $\mathrm{CHSH}_n + k$ are the same except for the algebra at the end.  

Let $\S$ be any classical strategy for the game, $\mathrm{CHSH} + k$ or $\mathrm{CHSH}_n + k$, using $(k+1)$-party non-signaling resources.  Assign each non-signaling resource involving $< k+1$ players to an arbitrary superset of $k+1$ players.  (For example, a two-local resource shared between \Alice\ and \Bob\ might be assigned to $\{\mathrm{\Alice}, \mathrm{\Bob}, \text{\Charliej{1}}, \ldots, \text{\Charliej{k-1}}\}$.)  Then for each subset of $k+1$ players, group together all the associated correlations into a single multi-round non-signaling resource.  Therefore we may assume that $\S$ uses exactly $k+2$ multi-round non-signaling resources.  Denote by $\cR_\A$ the resource involving all players \emph{except} \Alice, by $\cR_\B$ the resource involving all players except \Bob, and by $\cR_\Cj{j}$ the resource involving all players except \Charliej{j}.  

Let $\Q_1, \ldots, \Q_k$ be the $k$ consistency questions, and let $\Q_{a,b}$ be the game question in which \Alice\ and \Bob's respective inputs are $a$ and~$b$.  
For any question $\Q$, let $\epsilon_\Q = \Pr[\text{$\S$ loses} \vert \Q]$.  

Begin by considering $\Q_1$, the consistency question between \Alice\ and \Charliej{1}.  Their inputs are $A = C_1 = 0$.   We aim to apply \lemref{t:fixrandomnessgeneral} for \Alice\ and resource~$\cR_\Cj{1}$.  The two main assumptions of the lemma hold.  Indeed, of these two players, only \Alice\ has access to $\cR_\Cj{1}$.  Furthermore, no other question $\Q'$ with $A = 0$, either a consistency question or a game question, depends on \Charliej{1}'s output $Z_1$.  (Although $\mathrm{CHSH} + k$ and $\mathrm{CHSH}_n + k$ are not unique games, they are unique for all questions with $A = 0$.)  Therefore, by \lemref{t:fixrandomnessgeneral}, there exists a strategy $\S_1$ that is the same as $\S$ except that \Alice\ on input $A = 0$ ignores $\cR_\Cj{1}$.  The loss probabilities $\epsilon^{(1)}_\Q = \Pr[\text{$\S_1$ loses} \vert \Q]$ satisfy 
\begin{align*}
\epsilon^{(1)}_{\Q_1} &\leq \epsilon_{\Q_1} &
\epsilon^{(1)}_{\Q_{0,b}} &\leq \epsilon_{\Q_{0,b}} + 2 \epsilon_{\Q_1} \\
\epsilon^{(1)}_{\Q_j} &\leq \epsilon_{\Q_j} + 2 \epsilon_{\Q_1} \quad\text{for $j \neq 1$} &
\epsilon^{(1)}_{\Q_{a,b}} &= \epsilon_{\Q_{a,b}} \qquad\text{for $a \neq 0$}
 \enspace .
\end{align*}

Repeat the above argument for question $\Q_2$.  Applying \lemref{t:fixrandomnessgeneral} for \Alice\ and resource~$\cR_\Cj{2}$, we obtain a strategy $\S_2$ in which on input $A = 0$ \Alice\ ignores resources $\cR_\Cj{1}$ and $\cR_\Cj{2}$.  The loss probabilities satisfy $\epsilon^{(2)}_{\Q_{a,b}} = \epsilon_{\Q_{a,b}}$ for $a \neq 0$, and 
\begin{alignat*}{3}
\epsilon^{(2)}_{\Q_j} &\leq \epsilon^{(1)}_{\Q_j} + 2 \epsilon^{(1)}_{\Q_2} 
&&\leq \epsilon_{\Q_j} + 6 \epsilon_{\Q_1} + 2 \epsilon_{\Q_2}
&&\quad\text{for $j > 2$} \\
\epsilon^{(2)}_{\Q_{0,b}} &\leq \epsilon^{(1)}_{\Q_{0,b}} + 2 \epsilon^{(1)}_{\Q_2} 
&&\leq \epsilon_{\Q_{0,b}} + 6 \epsilon_{\Q_1} + 2 \epsilon_{\Q_2}
\end{alignat*}

Continue inductively.  Ultimately, we construct a strategy $\S_k$ in which on input $A = 0$ \Alice\ ignores all the resources $\cR_\Cj{1}, \ldots, \cR_\Cj{k}$, and for which $\epsilon^{(k)}_{\Q_{a,b}} = \epsilon_{\Q_{a,b}}$ for $a \neq 0$, and \begin{equation*}
\epsilon^{(k)}_{\Q_{0,b}}
\leq \epsilon_{\Q_{0,b}} + 2 \, \big( 3^{k-1} \epsilon_{\Q_1} + 3^{k-2} \epsilon_{\Q_2} + \cdots + \epsilon_{\Q_k} \big)
 \enspace .
\end{equation*}

Similar inequalities hold for the strategies one obtains by eliminating the consistency questions in any other order.  Averaging over the $k$ cyclically permuted orderings yields a strategy $\bar \S_k$ for which the loss probabilities $\bar \epsilon^{(k)}_\Q = \Pr[\text{$\bar \S_k$ loses} \vert \Q]$ satisfy $\bar \epsilon^{(k)}_{\Q_{a,b}} = \epsilon_{\Q_{a,b}}$ for $a \neq 0$, and 
\begin{align*}
\epsilon^{(k)}_{\Q_{0,b}}
&\leq \epsilon_{\Q_{0,b}} + \frac{3^k-1}{k} ( \epsilon_{\Q_1} + \cdots + \epsilon_{\Q_k} ) \\
&= \epsilon_{\Q_{0,b}} + (3^k - 1) \Pr[\text{$\S$ loses} \vert \text{consistency}]
 \enspace .
\end{align*}

From here on we consider only game questions $\Q_{a,b}$.  On input $A = 0$, \Alice\ in $\bar \S_k$ only uses the resource $\cR_\B$, which does not involve \Bob.  Since in this case the verifier's acceptance predicate $X = Y$ depends only on \Alice\ and \Bob, $\cR_\B$ intuitively should not be helpful.  While \lemref{t:fixrandomnessgeneral} does not apply, the argument is straightforward.  

Consider the questions $\Q_{0,b}$ with $b \in \{0,1\}$.  \Alice\ queries only $\cR_\B$ and \Bob\ queries only those resources aside from $\cR_\B$.  By the non-signaling property, we may assume that the other players $\text{\Charliej{1}}, \ldots, \text{\Charliej{k}}$ go last, and thus can parameterize \Alice\ and \Bob's local randomness by $r_\A$ and~$r_\B$, respectively, such that their outputs are deterministic functions $f(r_\A, z)$ and $g(b, r_\B, z)$ of their inputs and local randomness, and any shared randomness~$z$.  Thus the probability of $\bar \S_k$ winning on a game question with $A = 0$ satisfies 
\begin{align*}
\sum_{b=0}^1 \frac12 \Pr[\text{$\bar S_k$ wins} \vert \Q_{0,b}]
&= \sum_{z, r_\A, b, r_\B} p(z) p(r_\A) p(b) p(r_\B) \cdot \chi_{f(r_\A, z) = g(b, r_\B, z)} \\
&= \sum_z p(z) \sum_{c=0}^1 \Pr\!\big[\{ r_\A : f(r_\A, z) = c \}\big] \Pr\!\big[\{ (b, r_\B) : g(b, r_\B, z) = c \}\big]
 \enspace .
\end{align*}
For each $z$, let $c_z \in \{0,1\}$ to be the value that maximizes $\Pr\!\big[\{ (b, r_\B) : g(b, r_\B, z) = c \}\big]$.  Define a strategy $\S'$ for the players that is the same as $\bar \S_k$, except that \Alice\ always outputs $c_z$ when her input is $A = 0$.  

\begin{claim} \label{t:markov}
$\Pr[\text{\emph{$\S'$ loses}} \vert \text{\emph{game question}}] 
\leq \big( 1 + (3^k - 1) \Pr[A = 0 \vert \text{\emph{game}}] \big) \Pr[\text{\emph{$\S$ loses}}]$.  
\end{claim}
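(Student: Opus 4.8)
The plan is to split $\Pr[\text{$\S'$ loses} \vert \text{game}]$ according to whether $A = 0$, bound each part by loss probabilities of the original strategy~$\S$, and then exploit the fact that the consistency probability $q = 1 - 2/(3^k+1)$ was tuned precisely to make the resulting expression collapse. Since $\S'$ agrees with $\bar\S_k$ (and hence with $\S$) except in \Alice's response to $A = 0$, on every game question $\Q_{a,b}$ with $a \neq 0$ we have $\Pr[\text{$\S'$ loses} \vert \Q_{a,b}] = \epsilon_{\Q_{a,b}}$, so the $A \neq 0$ contribution is unchanged.

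For the questions $\Q_{0,b}$, I would use the optimality of $c_z$. Since $c_z$ maximizes $\Pr[\{(b,r_\B) : g(b,r_\B,z) = c\}]$ and the weights $\Pr[\{r_\A : f(r_\A,z) = c\}]$ sum to~$1$,
\[ \sum_z p(z)\,\Pr\!\big[\{(b,r_\B): g(b,r_\B,z)=c_z\}\big] \;\geq\; \sum_z p(z)\sum_{c} \Pr\!\big[\{r_\A: f(r_\A,z)=c\}\big]\,\Pr\!\big[\{(b,r_\B): g(b,r_\B,z)=c\}\big]. \]
The left side is $\tfrac12\sum_b \Pr[\text{$\S'$ wins} \vert \Q_{0,b}]$ and, by the displayed identity just before the claim, the right side is $\tfrac12\sum_b \Pr[\text{$\bar\S_k$ wins} \vert \Q_{0,b}]$; thus $\S'$ wins at least as often as $\bar\S_k$ on $A=0$ game questions. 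Passing to loss probabilities and inserting the consistency-elimination bound established above, namely that $\bar\S_k$ loses $\Q_{0,b}$ with probability at most $\epsilon_{\Q_{0,b}} + (3^k-1)\Pr[\text{$\S$ loses} \vert \text{consistency}]$, gives
\[ \Pr[\text{$\S'$ loses} \vert \text{game}, A=0] \;\leq\; \Pr[\text{$\S$ loses} \vert \text{game}, A=0] + (3^k-1)\Pr[\text{$\S$ loses} \vert \text{consistency}]. \]

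Averaging over the game-question inputs and combining with the unchanged $A \neq 0$ part yields the intermediate bound $\Pr[\text{$\S'$ loses} \vert \text{game}] \leq L_g + \alpha L_c$, where $L_g = \Pr[\text{$\S$ loses}\vert\text{game}]$, $L_c = \Pr[\text{$\S$ loses}\vert\text{consistency}]$, and $\alpha = (3^k-1)\Pr[A=0\vert\text{game}]$. Writing $\Pr[\text{$\S$ loses}] = q L_c + (1-q) L_g$, it remains to check $L_g + \alpha L_c \leq (1+\alpha)\Pr[\text{$\S$ loses}]$. With $\Pr[A=0\vert\text{game}] = \tfrac12$ one has $\alpha = (3^k-1)/2$, and the choice $q = (3^k-1)/(3^k+1)$ makes $(1+\alpha)q = \alpha$ and $(1+\alpha)(1-q) = 1$, so $(1+\alpha)\Pr[\text{$\S$ loses}] = \alpha L_c + L_g$ exactly, matching the intermediate bound.

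The main obstacle — really the crux of the whole construction — is this final cancellation. The desired inequality $L_g + \alpha L_c \leq (1+\alpha)(qL_c + (1-q)L_g)$ is equivalent to $(L_g - L_c)\big(q - \alpha(1-q)\big) \leq 0$, and because $L_g$ and $L_c$ can appear in either order, nothing can be guaranteed unless $q - \alpha(1-q) = 0$. This is exactly why $q = 1-2/(3^k+1)$ is forced, and it is the one place where the $\mathrm{CHSH}_n+k$ ``algebra at the end'' diverges, since there $\Pr[A=0\vert\text{game}]$ and hence the optimal $q$ take different values. The remaining ingredients — identifying $\S'$ with $\S$ off $A=0$ and the convexity/max step on $A=0$ — are routine given the setup already in place.
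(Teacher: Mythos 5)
Your proof is correct and follows essentially the same route as the paper's: split game questions by whether $A=0$, use the optimality of $c_z$ to pass from $\S'$ to $\bar\S_k$ on the $A=0$ questions, insert the consistency-elimination bound, and observe that the choice of $q$ makes the two coefficients coincide at $1+(3^k-1)p$. The only difference is presentational — you spell out the $c_z$ maximization step and note explicitly that the balancing condition forces $q$, both of which the paper leaves implicit.
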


\begin{proof}
On game questions with $A = 0$, $\S'$ wins with at least the probability that $\bar \S_k$ wins; and on game questions with $A \neq 0$, $\S'$, $\bar \S_k$ and $\S$ are the same.  Therefore on game questions we have, for $p = \Pr[A = 0 \vert \text{game question}]$, 
\begin{align*}
\Pr[\text{$\S'$ loses} \vert \text{game question}] 
&= p \Pr[\text{$\S'$ loses} \vert \text{game, $A = 0$}] + (1 - p) \Pr[\text{$\S'$ loses} \vert A \neq 0] \\
&\leq p \Pr[\text{$\bar \S_k$ loses} \vert \text{game, $A = 0$}] + (1 - p) \Pr[\text{$\S$ loses} \vert A \neq 0]
 \enspace .
\end{align*}
We want to relate this bound to $\Pr[\text{$\S$ loses}]$.  Substitute 
\begin{align*}
\Pr[\text{$\bar \S_k$ loses} \vert \text{game, $A = 0$}]
&= \tfrac12 \big( \Pr[\text{$\bar \S_k$ loses} \vert \Q_{0,0}] + \Pr[\text{$\bar S_k$ loses} \vert \Q_{0,1}] \big) \\
&\leq \Pr[\text{$\S$ loses} \vert \text{game, $A = 0$}] + (3^k - 1) \Pr[\text{$\S$ loses} \vert \text{consistency}] 
 \enspace .
\end{align*}
Then use $\Pr[\text{$\S$ loses}] = (1 - q) \Pr[\text{$\S$ loses} \vert \text{game}] + q \Pr[\text{$\S$ loses} \vert \text{consistency}]$, where $q = 1 / \big( 1 + \tfrac{1}{(3^k - 1) p} \big)$ is the probability of choosing a consistency question: 
\begin{align*}
\Pr[\text{$\S'$ loses} \vert \text{game question}] 
&\leq \Pr[\text{$\S$ loses} \vert \text{game}] + p (3^k - 1) \Pr[\text{$\S$ loses} \vert \text{consistency}] \\
&= \frac{1}{1-q} (1-q) \Pr[\text{$\S$ loses} \vert \text{game}] + \frac{(3^k - 1) p}{q} q \Pr[\text{$\S$ loses} \vert \text{consistency}] \\
&= \big( 1 + (3^k - 1) p \big) \Pr[\text{$\S$ loses}]
 \enspace ,
\end{align*}
since our choice for $q$ balances the coefficients, $\tfrac{1}{1 - q} = \tfrac{(3^k-1)p}{q} = 1 + (3^k-1)p$.  
\end{proof}

\paragraph{Analysis for $\mathrm{CHSH} + k$ game.}

Substitute into \claimref{t:markov} $\Pr[\text{$\S'$ loses} \vert \text{game}] \geq 1/4$ (from \propref{t:chshdeterministica0}) and $p = 1/2$ to find $\Pr[\text{$\S$ loses}] \geq \tfrac{1}{2 (3^k + 1)}$.

\paragraph{Analysis for $\mathrm{CHSH}_n + k$ game.}

We will argue that $\S'$ wins one of the two embedded CHSH$_n$ games with high probability, and apply \propref{t:chshndeterministica0}.  Having restricted to game questions, we may assume $C_1 = \cdots = C_k = 1$.  The inputs of the \Charlie\ players do not vary.  Therefore, \Alice\ and \Bob\ can simulate their interactions with the \Charlie\ players using shared randomness.  We can fix a value for this and all shared randomness.  Then either $Z = 0$ or $Z = 1$.  In the former case, the subset of questions $A \in \{0, \ldots, n-1\}$ forms a CHSH$_n$ game, while in the latter case the subset of questions $A \in \{1-n, \ldots, 0\}$ does.  

Consider the $Z = 0$ case; the case $Z = 1$ is symmetrical.  We have 
\begin{align*}
\Pr[\text{$\S'$ loses} \vert \text{game}] 
&= \Pr[\text{$\S'$ loses} \vert \text{$A \geq 0$, game}] \Pr[A \geq 0 \vert \text{game}] + \Pr[\text{$\S'$ loses} \vert A < 0] \Pr[A < 0 \vert \text{game}] \\
&= \Pr[\text{$\S'$ loses} \vert \text{$A \geq 0$, game}] \Pr[A \geq 0 \vert \text{game}]
 \enspace .
\end{align*}
(The last step is an equality, not just $\geq$, because $Z = 0$ wins the game when $A < 0$.)  

Conditioned on $A \geq 0$, we have $\Pr[A = a \vert \text{$A \geq 0$, game}] = \tfrac{1}{n}$ for $a \in \{0, \ldots, n-1\}$, the input probabilities for the CHSH$_n$ game.  On input $A = 0$, \Alice's output is deterministic.  By \propref{t:chshndeterministica0}, $\Pr[\text{$\S'$ loses} \vert \text{$A \geq 0$, game}] \geq \tfrac{1}{2 n}$.  It follows that $\Pr[\text{$\S'$ loses} \vert \text{game}] \geq 1 / (4 n - 2)$.  

Thus 
\begin{align*}
\Pr[\text{$\S$ loses}]
\geq \frac{1}{1 + \tfrac{3^k - 1}{2n-1}} \frac{1}{4n - 2}
= \frac{1}{2 (2 n + 3^k - 2)}
 \enspace .
\end{align*}
This is $\Omega(1/n)$ if we consider $k$ a constant.  For sufficiently large $n$, this is indeed worse than the optimal quantum strategy, which loses with probability less than $\tfrac{1}{2 n^2}$ (\propref{t:extendedchshnquantumvalue}).  
\end{proof}

\section{Open problems} \label{s:openproblems}

A natural open problem is to improve the gap between quantum and two-local non-signaling theories, in order to ease experimental tests.  We do not know whether our analysis for the Extended CHSH game is tight.  Can parallel repetition help?  Also, other games might give larger gaps, with similar experimental complexities.  For example, quantum players can win the five-player cycle graph game with probability one, whereas in unpublished work we have calculated using \lemref{t:fixrandomnessgeneral} that classical players sharing arbitrary two-local non-signaling correlations can win with probability at most $1 - \tfrac{5}{414} \approx 98.8\%$.  

A related research direction is to devise a two-party game for which there is a robust separation in the number/dimension of non-signaling resources required to win, versus the number of EPR states.  Broadbent and M\'ethot~\cite{BroadbentMethot06nonlocal} give an exponential separation, but it is not robust.

\subsection*{Acknowledgements}

We would particularly like to thank Zeph Landau and Umesh Vazirani for helpful conversations.  
Research supported by NSF grant CCF-1254119, ARO grant W911NF-12-1-0541, and the AFOSR.

\bibliographystyle{alpha-eprint}
\bibliography{q}

\appendix

\section{Nonlocal games that do not separate quantum from two-local non-signaling correlations} \label{s:distributedgamecounterexample}

Quantum players can win the Extended CHSH$_n$ game with a higher probability than classical players sharing two-local non-signaling correlations (\thmref{t:distributedgame}).  One way to understand this game and its analysis is to study games that lack this property, i.e., for which a two-local non-signaling strategy can match any quantum strategy.  

\smallskip

One of the simplest non-trivial three-party games is the GHZ game~\cite{GreenbergerHorneShimonyZeilinger90ghzgame}.  In this game, triples of valid inputs to the players are $(0,0,1), (0,1,0), (1,0,0)$ and $(1,1,1)$.  The verifier checks that the exor of the players' responses is $0$ in the first three cases, or~$1$ for the inputs $(1,1,1)$.  The classical value is $\valueclassical = 3/4$, and the quantum value is $\valuequantum = 1$.  However, if two of the players share a nonlocal box, then they can also win with probability~$1$~\cite{BroadbentMethot06nonlocal}.  (They give their inputs to the box and return its outputs, while the third player outputs~$0$.)  

In fact, any game in which the verifier's acceptance predicate depends only on the exor of the players' responses---any ``exor game"---can be won with certainty by classical players sharing nonlocal boxes~\cite{Dam05nonlocal}.  (Indeed, the idea is simple: if $x = \bigoplus_j x_j$ and $y = \bigoplus_j y_j$, where player~$j$ has bits~$x_j$ and~$y_j$, then $x y = \bigoplus_{i,j} x_i y_j$, and each term can be computed in a distributed fashion using a nonlocal box.  Therefore the players can compute any function on distributed bits.)  Exor games cannot separate quantum from two-local non-signaling strategies.  

\smallskip

\begin{figure}
\centering
\begin{tabular}{c@{$\qquad\qquad\qquad$}c}
\subfigure[\label{f:distributedgame}]{\raisebox{.2cm}{\includegraphics[scale=.125]{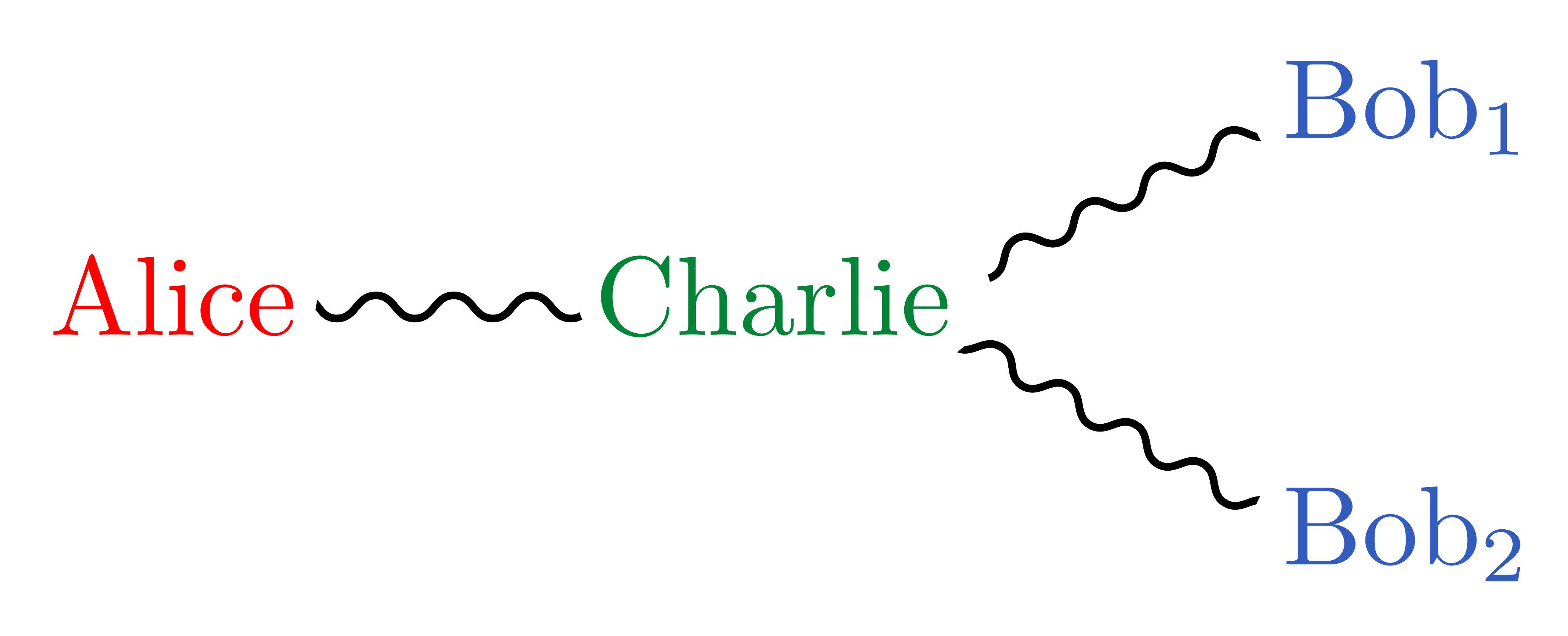}}} &
\subfigure[\label{f:distributedgamecondensed}]{\raisebox{1cm}{\includegraphics[scale=.125]{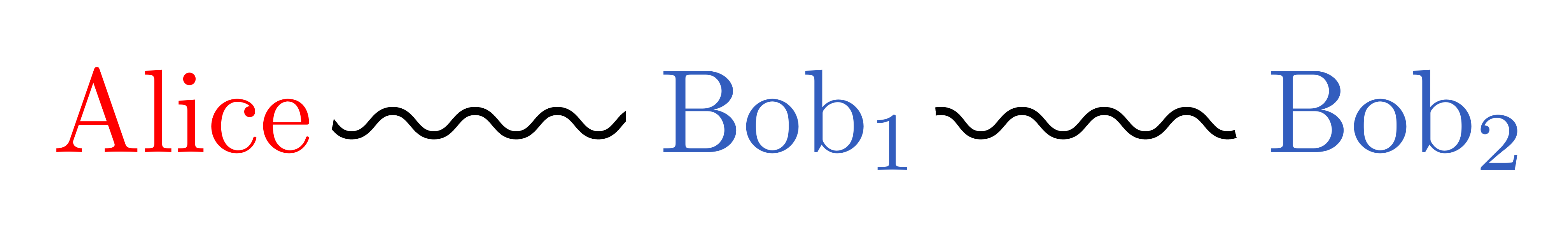}}}
\end{tabular}
\caption{
(a) In a distributed version of the CHSH game, \Alice\ plays a CHSH game with either \Bobone\ or \Bobtwo.  She does not know which, but \Charlie\ is told, and with Bell measurements he creates the necessary entanglement between \Alice\ and the selected \Bobj{J}.  
(b) A condensed version of the game.  Again, \Alice\ plays either with \Bobone\ or \Bobtwo, and in the latter case \Bobone\ teleports the needed quantum correlation.  
} \label{f:distributedgames}
\end{figure}

A game that intuitively might separate quantum from two-local non-signaling correlations is shown in \figref{f:distributedgame}.  \Charlie\ shares an EPR state with each of the other three players, \Alice, \Bobone\ and \Bobtwo.  The verifier tells \Charlie\ a random index $J \in \{1, 2\}$, so that he can use a Bell measurement to create an EPR state between \Alice\ and \Bobj{J}.  The verifier then referees a CHSH game between \Alice\ and \Bobj{J}, adjusting \Alice's answer according to the Pauli correction reported by \Charlie, as in Eq.~\eqnref{e:teleportedCHSH}.  The optimal quantum strategy wins with probability $\valuequantum(\mathrm{CHSH}) = \cos^2 \tfrac{\pi}{8}$.  

This distributed CHSH game is similar to the Teleported CHSH game from \thmref{t:teleport}, except with two \Bob\ players.  Intuitively, perhaps, two-local non-signaling correlations should not help the players, since \Alice\ does not know with which \Bob\ she is playing.  However, this intuition is incorrect: 

\begin{claim}
Players with appropriate two-local non-signaling correlations can win the distributed CHSH game with certainty.  
\end{claim}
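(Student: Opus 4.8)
The plan is to exhibit an explicit winning strategy by giving the players the right collection of two-local non-signaling resources and then verifying that the acceptance predicate of Eq.~\eqnref{e:teleportedCHSH} is always satisfied. The key observation is that, although \Alice\ does not learn the index~$J$, \Charlie\ does, and the two-local resources on \Charlie's side can be used to route a nonlocal-box-like correlation to whichever \Bobj{J} is selected. Concretely, I would equip \Alice\ with a single input bit (her CHSH question $A$) feeding into two-party resources she shares with \Charlie, and have \Charlie\ share resources with each of \Bobone\ and \Bobtwo. The goal is to arrange that the composite correlation between \Alice\ and the active \Bobj{J} behaves exactly like a Popescu-Rohrlich box, $X \oplus Y = A B$, while the teleportation-correction bits $Z_1, Z_2$ reported by \Charlie\ are absorbed consistently into the predicate.

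First I would recall, following the discussion around \figref{f:factorizations}, that a nonlocal box can be decomposed across an intermediary: if \Alice-\Charlie\ and \Charlie-\Bobj{J} each share a nonlocal box, then \Charlie\ can feed \Alice's correlated output as input into his box with \Bobj{J}, effectively teleporting the box and producing the relation $X \oplus Y = A B$ up to corrections that \Charlie\ knows and can report. The central point is that this ``non-signaling teleportation'' is possible precisely when the intermediary \emph{has inputs} on both sides---which contrasts with \thmref{t:teleport}, where \Charlie's resources had a fixed input and so could be simulated by shared randomness. Here \Charlie's index $J$ acts as a genuine input selecting which downstream resource to activate, so the argument of \thmref{t:teleport} does not apply.

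The key steps, in order, are: (i) specify the resources---an \Alice-\Charlie\ nonlocal box taking \Alice's input $A$, and \Charlie-\Bobj{1}, \Charlie-\Bobj{2} resources keyed to the selected index; (ii) specify \Charlie's deterministic post-processing that combines his box outputs to compute the correction bits $Z_1, Z_2$ and route the correlation to \Bobj{J}; (iii) verify the non-signaling property of each two-party resource, i.e.\ that no single resource allows signaling between its two parties (the global correlation is allowed to be ``three-local in effect'' only through \Charlie's classical choice of input, which is legitimate); and (iv) plug the resulting outputs into Eq.~\eqnref{e:teleportedCHSH} and check that $X \oplus Y \oplus A Z_1 \oplus (1-A) Z_2 = A B$ holds identically, so the players win with certainty. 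Steps (i)--(ii) amount to reverse-engineering the quantum teleportation circuit at the level of non-signaling boxes.

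The main obstacle I anticipate is step (iii) combined with the bookkeeping of the correction bits in step (iv): I must ensure that the correlations I assign really do factor as legitimate \emph{two-party} non-signaling distributions and that the index $J$ is used only as \Charlie's classical input rather than as illicit side-information shared with \Alice. In particular, I have to confirm that \Alice's marginal output depends only on her own input $A$ (not on $J$, which she never sees, and not on $B$), and that the correction bits $Z_1, Z_2$ are computed so that the predicate simplifies correctly for both values of $A$ and both values of $J$. Getting the Pauli-correction algebra to line up with the $A Z_1 \oplus (1-A) Z_2$ structure---so that the unused branch's correction is harmlessly ignored while the active branch's correction exactly cancels the teleportation byproduct---is the delicate part, but it is a finite case check once the resources are fixed.
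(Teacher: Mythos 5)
Your plan is built on the wrong resource architecture, and the obstruction is not the bookkeeping you worry about in steps (iii)--(iv) but the very first step: routing the CHSH correlation through \Charlie\ via \Alice--\Charlie\ and \Charlie--\Bobj{J} resources is exactly the ``non-signaling teleportation'' that \thmref{t:teleport} proves impossible. Your escape hatch---that \Charlie\ now has a genuine input $J$, so the simulation argument of \thmref{t:teleport} does not apply---does not hold up. Condition on a fixed value $J = j$. Then \Charlie's input is a constant, so every resource he touches has its input on his side determined by local randomness alone (never by $A$ or $B$); the simulation argument in the proof of \thmref{t:teleport} then replaces all of \Charlie's interactions by shared randomness, collapsing your entire strategy to a local-hidden-variable strategy between \Alice\ and \Bobj{j}. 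Hence $\Pr[\text{win} \mid J = j] \leq \valueclassical(\mathrm{CHSH}) = 3/4$ for each $j$, and averaging over $J$ gives at most $3/4$, not $1$. Concretely, the chaining you describe in step (ii) fails because, by non-signaling, \Charlie's output from his box with \Alice\ is marginally independent of $A$; feeding it forward into his box with \Bobj{J} produces cross terms involving \Alice's private output, which \Charlie\ cannot cancel with correction bits computed without knowledge of $A$ or $B$.

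The fix is to put the nonlocal boxes where the game actually permits them: nothing in the distributed CHSH game forbids direct \Alice--\Bobone\ and \Alice--\Bobtwo\ two-party resources. The paper's strategy has \Alice\ share an ordinary nonlocal box with \emph{each} \Bob, play both CHSH games with the same input $A$, and obtain outputs $X_1, X_2$ satisfying $X_j \oplus Y_j = A B_j$. What goes through \Charlie\ is only the \emph{selection}: a two-party \Alice--\Charlie\ ``selection box'' takes $(X_1, X_2)$ from \Alice\ and $J$ from \Charlie, outputs a uniformly random bit $X$ to \Alice\ and $Z_1 = Z_2 = X \oplus X_J$ to \Charlie. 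This is non-signaling (both marginals are uniform), and the verifier's corrected answer $X \oplus A Z_1 \oplus (1-A) Z_2$ equals $X_J$, so Eq.~\eqnref{e:teleportedCHSH} holds identically. The lesson, which is the point of \appref{s:distributedgamecounterexample}, is that two-local non-signaling resources can win here precisely because the CHSH correlation itself never has to be teleported---only a classical routing decision does.
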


\begin{proof}
\Alice\ uses two nonlocal boxes to play CHSH games with \Bobone\ and \Bobtwo, giving the same input~$A$ to both.  Call the respective outputs $X_1$ and~$X_2$.  She feeds these outputs into the following ``selection box" shared with \Charlie: 
\begin{equation*}
\includegraphics[scale=.45]{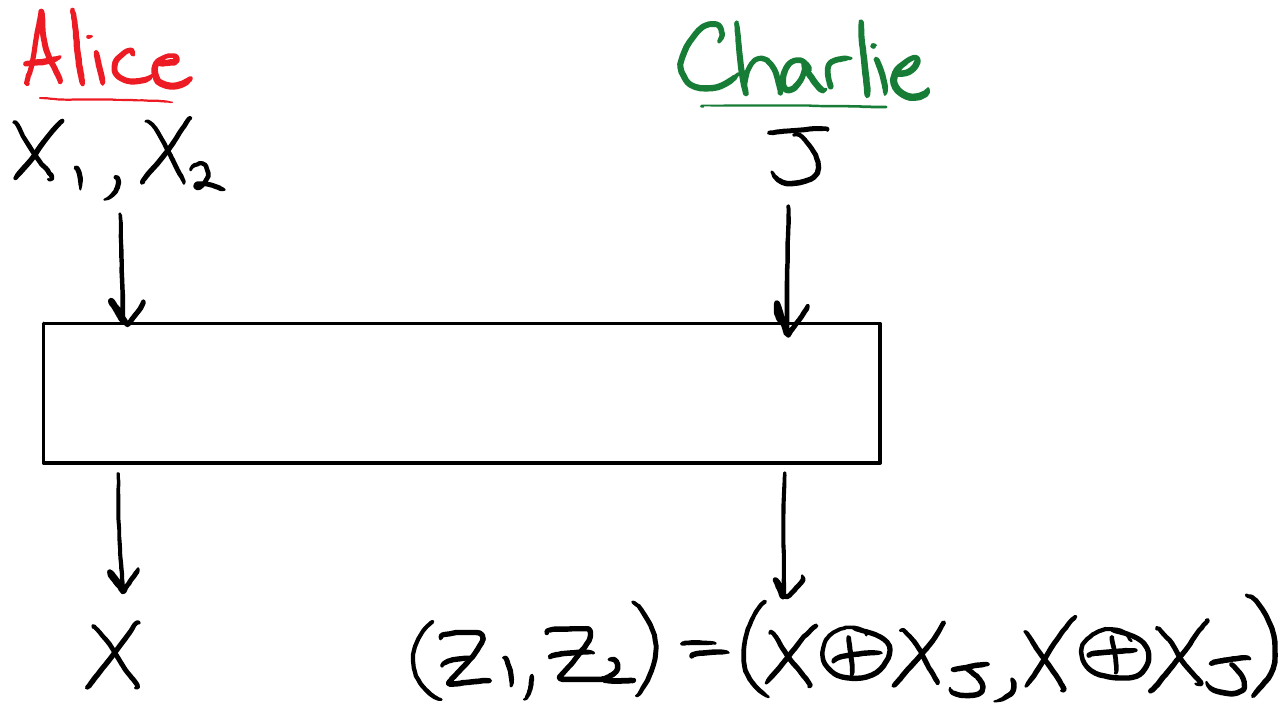}
\end{equation*}
\Alice's output is a uniformly random bit~$X$, but correlated such that its exor with either of \Charlie's $Z$ outputs equals $X_J$.  
\end{proof}

This distributed CHSH game superficially seems similar to the Extended CHSH game of \thmref{t:distributedgame}, except that the desired EPR state between \Alice\ and the selected player~$J$ is created by quantum teleportation instead of by a $\sigma^x$ measurement on $\ket{000} + \ket{111}$.  In using teleportation to create an EPR state, \Alice's Pauli correction can be any of $I, \sigma^x, \sigma^y, \sigma^z$.  On the other hand, a $\sigma^x$ measurement on $\ket{000} + \ket{111}$ generates an EPR state up to a correction only of $I$ or~$\sigma^z$, so the correction can be ignored when \Alice\ measures in the~$\sigma^z$ basis.  This difference is crucial.  

The first step in the proof of \thmref{t:distributedgame} also works for the distributed CHSH game.  That is, considering $J = 2$, we can argue that \Alice's answer depends little on $\randlocal{\A}{\Bj1}$, the local randomness in her resource with \Bobone.  However, we cannot follow the argument further, because the verifier's acceptance predicate always depends on \Charlie's answer.  (Although \Alice's final {answer} depends little on $\randlocal{\A}{\Bj1}$, her full transcript can depend on this randomness, and she can pass the dependence over to \Charlie.)  At best, one can bound below~$1$ the maximum success probability of a \emph{nonadaptive} two-local non-signaling strategy, i.e., a strategy in which the inputs to all resources must be decided on before receiving any outputs.  Furthermore, there exists a nonadaptive three-local non-signaling strategy that wins with certainty: 

\begin{claim}
Players with appropriate three-party non-signaling correlations can access them nonadaptively to win the distributed CHSH game with certainty.  
\end{claim}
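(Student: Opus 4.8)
The plan is to exploit the fact that, once the verifier has chosen the index $J$, the acceptance predicate of the distributed CHSH game involves only three players---\Alice, \Charlie, and $\Bobj{J}$---and therefore can be satisfied with certainty by a single tripartite non-signaling resource whose inputs are exactly the verifier's questions. Concretely, following \eqnref{e:teleportedCHSH}, the verifier accepts iff $X \oplus Y_J \oplus A Z_1 \oplus (1-A) Z_2 = A B$, where \Alice\ reports $X$ on input $A$, $\Bobj{j}$ reports $Y_j$ on input $B$, and \Charlie\ reports the Pauli-correction bits $(Z_1, Z_2)$ of the Bell measurement selected by $J$. For each $j \in \{1,2\}$ I would introduce a three-party resource $R_j$ shared by \Alice, $\Bobj{j}$, and \Charlie, and design it so that its outputs already satisfy this single affine relation. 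This is the three-party analogue of the ``selection box'' of the two-local construction above, but now the selection is folded directly into a resource on the three relevant parties rather than synthesized by composing two-local boxes.

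The subtlety is that \Alice\ must report a single bit $X$ that is correct for whichever game she is actually playing, even though, by non-signaling, her marginal output cannot reveal $J$. I would resolve this by having \Alice\ generate one fresh uniform bit $X$ locally, feed the pair $(A,X)$ as input to both $R_1$ and $R_2$, and output $X$ itself. Each $R_j$ is then the tripartite distribution that, on inputs $(A,X)$ from \Alice, $B$ from $\Bobj{j}$, and an ignored input from \Charlie, samples $Y_j$ uniformly and sets $Z_2^{j} = X \oplus Y_j$ with $Z_1^{j}$ uniform when $A=0$, and sets $Z_1^{j} = X \oplus Y_j \oplus B$ with $Z_2^{j}$ uniform when $A=1$. \Charlie, who knows $J$, simply reports $(Z_1, Z_2) = (Z_1^{J}, Z_2^{J})$, and $\Bobj{j}$ reports $Y_j$. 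A direct substitution then gives $X \oplus Y_J \oplus A Z_1 \oplus (1-A) Z_2 = A B$ for every $A, B, J$, so the players win with certainty. Because the only inputs to $R_1, R_2$ are the verifier's questions $A, B$ together with \Alice's local bit $X$ fixed in advance, no resource is ever queried on the output of another, so the strategy is nonadaptive.

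The step I expect to require the most care is verifying that each $R_j$ is genuinely non-signaling, since the correction bits are correlated with \Alice's and $\Bobj{j}$'s data. I would check the three marginals separately: \Alice's reported bit is her own local randomness and so is independent of the other parties' inputs; $\Bobj{j}$'s output $Y_j$ is uniform by construction; and \Charlie's pair $(Z_1^{j}, Z_2^{j})$ is, in each of the cases $A=0$ and $A=1$, uniform on $\{0,1\}^2$ and hence carries no information about $(A,X)$ or $B$. I would also note that reusing the single local bit $X$ across $R_1$ and $R_2$ introduces no signaling, as it is classical local processing on \Alice's side and each $R_j$ is separately non-signaling. Finally, it is worth contrasting this with the two-local case: there the analogous selection genuinely required the CHSH outputs $X_1, X_2$ as \emph{inputs} to a further resource, forcing adaptivity, whereas the availability of a single three-party resource here lets every resource input be a verifier question, which is exactly what nonadaptivity demands.
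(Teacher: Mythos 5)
Your construction fails the paper's definition of non-signaling for the tripartite resource $R_j$, and the failure is exactly at the step you flagged as needing the most care. You verify only the three \emph{single-party} output marginals, but for three or more parties the paper (Section~\ref{s:nonsignaling}) requires that changing any one party's input leave the \emph{joint} marginal of all the other parties' outputs unchanged; it explicitly notes that checking one party at a time is a strictly weaker condition. Your $R_j$ violates the stronger condition: when $A=0$ the outputs satisfy $Z_2^{j} \oplus Y_j = X$, and when $A=1$ they satisfy $Z_1^{j}\oplus Y_j \oplus B = X$, so the joint distribution of $(Y_j, Z_1^{j}, Z_2^{j})$ held by the coalition $\{\text{\Bobj{j}}, \text{\Charlie}\}$ depends on \Alice's input $(A,X)$ --- for instance, under $A=0$ the outcome $(y,z_1,z_2)=(0,0,0)$ has probability $1/4$ if $X=0$ and probability $0$ if $X=1$. \Alice\ could therefore signal her bit $X$ to that pair by her choice of input, so $R_j$ is not an admissible non-signaling resource and the claim is not established. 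The root cause is feeding \Alice's answer $X$ \emph{into} the resource: any resource that enforces the affine acceptance relation while taking $X$ as an input must leak $X$ into the joint distribution of the remaining outputs.

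The paper's proof avoids this by making $X$ an \emph{output}. Its resource takes only a ``yes''/``no'' selection input from \Charlie\ (plus $A$ from \Alice\ and $B$ from \Bobj{j}); on ``yes'' it outputs $X$, $(Z_1,Z_2)$, $Y$ satisfying Eq.~\eqnref{e:teleportedCHSH} with every pairwise output marginal uniform (which one can check is genuinely non-signaling, since any proper subset of the four output bits is uniformly distributed independently of the inputs), and on ``no'' it outputs $X=Z_1=Z_2$ uniform with $Y$ independent. The two copies then produce two candidate $X$'s, and the consistency problem you tried to solve with a shared local bit is instead solved by having \Alice\ and \Charlie\ each XOR their two resources' outputs, so the ``no'' resource's contribution cancels. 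If you want to repair your argument, that is the modification needed; as written, the proof has a genuine gap.
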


\begin{proof}
Let $R$ be the following \Alice-\Charlie-\Bob\ resource.  \Charlie\ can input either ``yes" or ``no."  
\begin{itemize}
\item 
If \Charlie\ inputs ``yes," then the resource plays a Teleported CHSH game between the three parties.  Thus the outputs $X, (Z_1, Z_2), Y$ satisfy Eq.~\eqnref{e:teleportedCHSH}, with a uniformly random marginal distribution for any pair of the outputs, $X$ and $(Z_1, Z_2)$, $X$ and $Y$, or $(Z_1, Z_2)$ and $Y$.  
\item 
If \Charlie\ inputs ``no," then the resource gives the same uniformly random bit to \Alice\ and \Charlie, $X = Z_1 = Z_2$, and an independent uniformly random bit~$Y$ to \Bob.  
\end{itemize}
This resource is non-signaling.  

Assume that the players share two copies of~$R$, between \Alice-\Charlie-\Bobone\ and \Alice-\Charlie-\Bobtwo.  Each \Bob\ gives his input to his resource and outputs the response.  On input $J$, \Charlie\ gives ``yes" to the \Alice-\Charlie-\Bobj{J} resource and ``no" to the other resource.  He adds the outputs mod~$2$.  \Alice\ gives her input to both of her resources, and outputs the summed responses mod~$2$.  

\begin{figure}
\centering
\includegraphics[scale=.5]{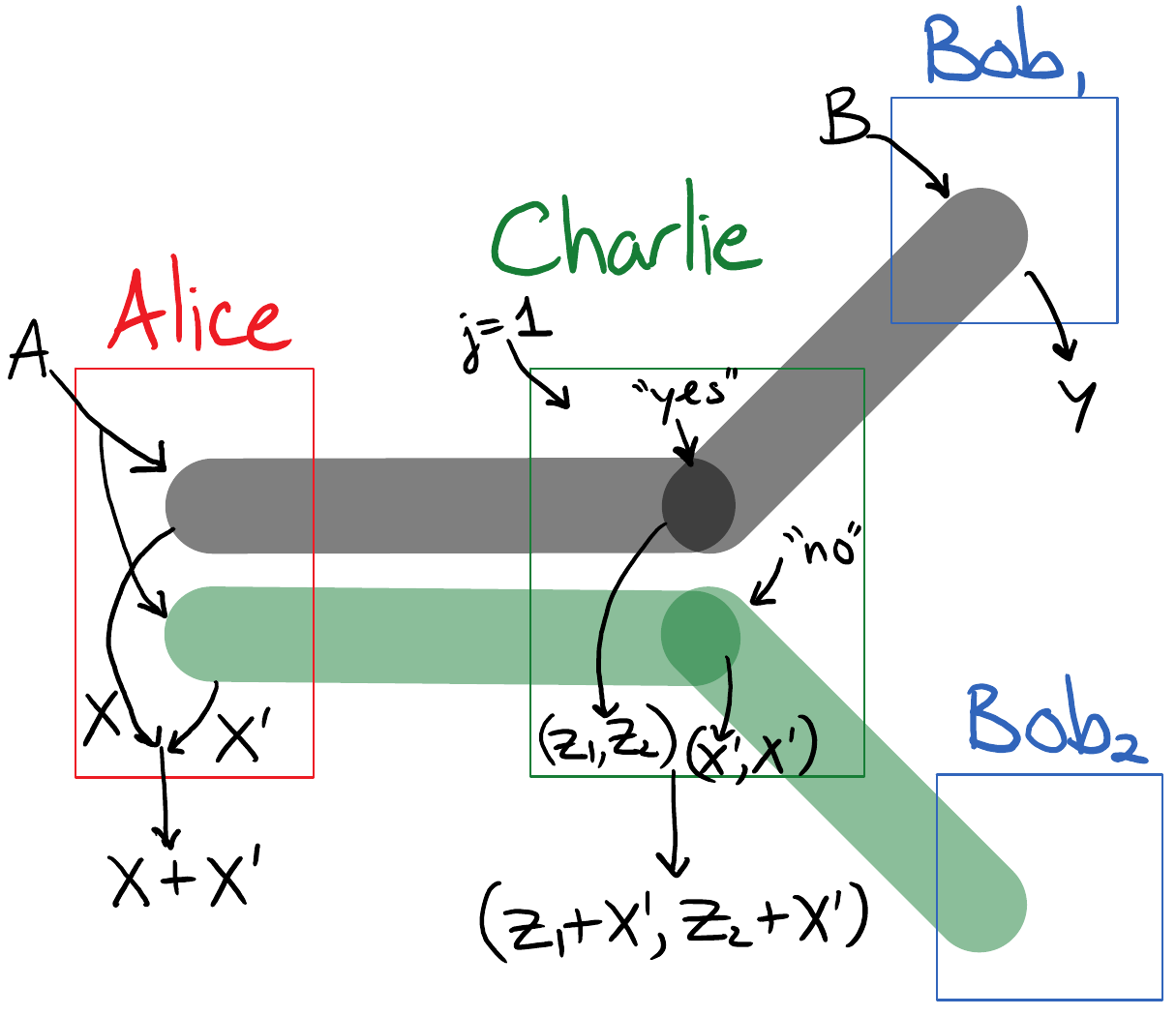}
\caption{Players sharing three-party non-signaling correlations can access them nonadaptively to win the distributed CHSH game.} \label{f:distributedchsh3localstrategy}
\end{figure}

For the $J = 1$ case, this strategy is illustrated in \figref{f:distributedchsh3localstrategy}.  When $A = 0$, the outputs satisfy 
\begin{align*}
\big(\text{\Alice's output $X + X'$}\big) + (\text{\Bobone's output $Y$}) + \big(\text{\Charlie's second output $Z_2 + X'$}\big) 
&= X + Y + Z_2 \\
&= A B
 \enspace ,
\end{align*}
whereas when $A = 1$ we have 
\begin{align*}
\big( X + X' \big) + Y + \big( Z_1 + X' \big) 
&= X + Y + Z_1 \\
&= A B
 \enspace ,
\end{align*}
since the $X'$ terms from the second shared resource cancel.  Therefore the players always win.  
\end{proof}

A similar strategy also works when there are more than two \Bob s, with \Alice-\Charlie-\Bobj{i} shared resources for all~$i$.  \Alice\ gives the same input~$A$ to all her resources and adds the responses; while \Charlie\ inputs ``yes" to the \Alice-\Charlie-\Bobj{j} resource, for the specified~$j$, ``no" to the others, and sums the responses.  Then \Alice\ and \Charlie's answers from the ``no" resources cancel out.

\section{Quantum versus non-signaling for Extended CHSH$_n$ games} \label{s:chshnanalysis}

In this section we consider generalizations of the CHSH game: CHSH$_n$ for $n = 2, 3, 4, \ldots$, where CHSH$_2$ is the standard CHSH game.  We define the $k$-Extended CHSH$_n$ game, for which \thmref{t:robustseparationchshn} establishes a gap between the winning probabilities achievable quantumly and those achievable by classical players sharing arbitrary $(k+1)$-local non-signaling resources.

\subsection{CHSH$_n$ game} \label{s:chshn}

The CHSH game is the first in a family of games known as chained Bell correlations~\cite{Pearle70chainedBellinequalities, BraunsteinCaves90chainedBellinequalities}.  

In CHSH$_n$, draw $A$ uniformly from $\{0, 1, \ldots, n-1\}$ and choose~$B$ either $A$ or $A + 1 \pmod n$ with equal probabilities.  \Alice\ takes $A$ and outputs a bit~$X$.  \Bob\ takes $B$ and outputs a bit~$Y$.  If $A = B = n - 1$, then accept if $X \neq Y$; and otherwise accept if $X = Y$.  For $n \geq 2$, the classical and quantum values are 
\begin{equation}\begin{split} \label{e:chainedbellvalues}
\valueclassical(\mathrm{CHSH}_n) &= 1 - \frac{1}{2 n} \\
\valuequantum(\mathrm{CHSH}_n) &= \cos^2 \frac{\pi}{4 n}
 \enspace .
\end{split}\end{equation}
An optimal quantum strategy uses one EPR state $\tfrac{1}{\sqrt 2}(\ket{00} + \ket{11})$.  On input~$a$, \Alice\ measures $\cos(a \tfrac{\pi}{n}) \sigma^z + \sin(a \tfrac{\pi}{n}) \sigma^x$ for $a = 0, \ldots, n-2$, or $\cos(-\tfrac{\pi}{n}) \sigma^z + \sin(-\tfrac{\pi}{n}) \sigma^x$ for $a = n-1$.  On input~$b$, \Bob\ measures $\cos((b-\tfrac12) \tfrac{\pi}{n}) \sigma^z + \sin((b-\tfrac12) \tfrac{\pi}{n}) \sigma^x$.  See \figref{f:chshn}.  
On all valid inputs, the success probability is $\cos^2 \tfrac{\pi}{4 n}$.  

\begin{figure}
\centering
\includegraphics[scale=1]{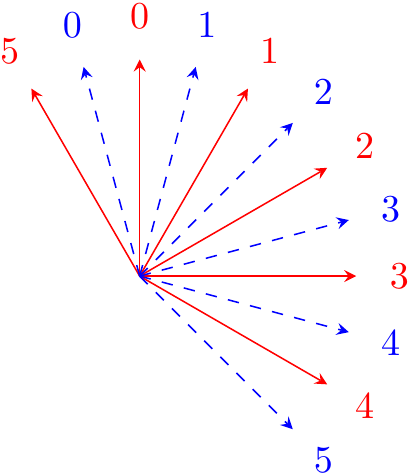}
\caption{Optimal measurement directions on the Bloch sphere for \Alice\ (solid, red) and \Bob\ (dashed, blue) for the CHSH$_n$ game with $n = 6$.  Consecutive vectors are separated by angle $\pi/(2n)$.} \label{f:chshn}
\end{figure}

The general CHSH$_n$ games have found frequent applications in quantum cryptography, e.g.,~\cite{BarrettHardyKent04diqkd, BarrettKentPironio06nonlocal}, and in quantum foundations~\cite{ColbeckRenner08hiddenvariables, ColbeckRenner11quantumcomplete, ColbeckRennet11amplifyrandomness}.  As for the CHSH game, \propref{t:chshdeterministica0}, an arbitrary non-signaling strategy in which $\Pr[X = 0 \vert A = 0] = 1$ cannot beat the classical value: 

\begin{proposition} \label{t:chshndeterministica0}
In the $\mathrm{CHSH}_n$ game with \Alice's response to question $A = 0$ fixed to~$X = 0$, the non-signaling value is $1 - \tfrac{1}{2n}$, the same as the classical value.  
\end{proposition}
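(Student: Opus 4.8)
The plan is to translate the non-signaling value into a single chained-Bell correlator inequality and then prove that inequality by a telescoping triangle-inequality argument around the $2n$-cycle of valid input pairs, using the deterministic hypothesis $\Pr[X = 0 \mid A = 0] = 1$ as an ``anchor.'' First I would pass to $\pm 1$-valued outputs, writing $X = (-1)^x$ and $Y = (-1)^y$, and introduce the marginals $x_a = \langle X \mid A = a\rangle$, $y_b = \langle Y \mid B = b\rangle$ and the correlators $E_{ab} = \langle XY \mid A=a, B=b\rangle$. The $2n$ valid pairs $(a,a)$ and $(a, a{+}1 \bmod n)$ form a single cycle $A_0 - B_1 - A_1 - B_2 - \cdots - B_{n-1} - A_{n-1} - B_0 - A_0$, on which every edge carries an ``equality'' constraint except the special edge $(n-1,n-1)$. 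Averaging the per-pair win probabilities $\tfrac{1 + E_{ab}}{2}$ on equality edges and $\tfrac{1 - E_{n-1,n-1}}{2}$ on the special edge yields
\[
2n\,\omega = n + \tfrac12 S, \qquad S := \sum_{\text{equality edges}} E_{ab} \;-\; E_{n-1,n-1},
\]
so that proving the desired bound $\omega \le 1 - \tfrac{1}{2n}$ is equivalent to proving $S \le 2(n-1)$.

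Next I would exploit the defining structure of non-signaling. Over these inputs, a distribution is non-signaling precisely when Alice's marginal $x_a$ depends only on $a$ and Bob's marginal $y_b$ only on $b$; beyond consistency of these single-vertex marginals there is \emph{no} further constraint coupling distinct pairs, so each $E_{ab}$ may be chosen independently within the range forced by nonnegativity of the associated $2\times 2$ joint, namely $|x_a + y_b| - 1 \le E_{ab} \le 1 - |x_a - y_b|$. To maximize $S$ one therefore takes $E_{ab} = 1 - |x_a - y_b|$ on the $2n-1$ equality edges and $E_{n-1,n-1} = |x_{n-1} + y_{n-1}| - 1$ on the special edge, which gives
\[
S \le 2n - \Big( \sum_{\text{equality edges}} |x_a - y_b| \;+\; |x_{n-1} + y_{n-1}| \Big).
\]
The deterministic hypothesis becomes $x_0 = 1$, so it remains to show the bracketed sum is at least $2$ for all $x_a, y_b \in [-1,1]$ with $x_0 = 1$.

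For this I would walk the cycle once, starting and ending at $A_0$, attaching to each vertex a sign equal to $+1$ until the special edge is crossed and $-1$ thereafter; writing $\tilde v$ for the signed value of a vertex, each equality term equals $|\tilde u - \tilde w|$ for its two equally-signed endpoints, while the special term equals $|x_{n-1} + y_{n-1}| = |(+y_{n-1}) - (-x_{n-1})| = |\tilde B_{n-1} - \tilde A_{n-1}|$ precisely because the sign flips across that edge. Since the walk is closed but the sign flips an odd number of times (once), the signed value of $A_0$ at the end is $-x_0$, opposite to its starting value $+x_0$; the triangle inequality along the walk then gives $\sum_i |\tilde v_i - \tilde v_{i+1}| \ge |x_0 - (-x_0)| = 2|x_0| = 2$, exactly as needed. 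Hence $S \le 2(n-1)$ and $\omega \le 1 - \tfrac1{2n}$.

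Finally I would note the bound is tight even under the constraint: taking Alice to output $0$ always and Bob to output $0$ except on $b = n-1$ violates only the single edge $(n-2,n-1)$, so this classical (hence non-signaling) strategy wins with probability $1 - \tfrac1{2n} = \valueclassical(\mathrm{CHSH}_n)$, giving equality. I expect the main obstacle to be the second step: one must argue carefully that a non-signaling box imposes no constraint on the correlators beyond the shared single-vertex marginals — this is exactly what distinguishes non-signaling from local-hidden-variable models, where joint distributions over several inputs would exist and forbid the extremal correlator choices — and that those extremal choices remain simultaneously feasible. Once this per-edge independence is established, the telescoping step is routine.
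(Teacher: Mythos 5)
Your proof is correct, but it takes a genuinely different route from the paper's. The paper works directly with the joint probabilities $p(x,y\vert a,b)$ and observes that, once $\Pr[X=0\vert A=0]=1$ is imposed, the non-signaling constraints are exactly the constraints of a unit flow on a directed graph with $4n$ vertices; integrality of the flow polytope then shows every such non-signaling strategy is a convex combination of deterministic local strategies, so the constrained non-signaling value equals the classical value for free. You instead reduce the value to the chained correlator sum $S$, use the fact that a two-party non-signaling box is nothing more than a family of $2\times 2$ joints glued only along single-party marginals (so each $E_{ab}$ independently ranges over $[\,\lvert x_a+y_b\rvert-1,\ 1-\lvert x_a-y_b\rvert\,]$), and then close the argument with a signed telescoping triangle inequality around the $2n$-cycle anchored at $x_0=1$. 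Both arguments are sound; I checked your bookkeeping ($2n\,\omega=n+\tfrac12 S$, the extremal correlator ranges, the single sign flip across the special edge giving $\sum_i\lvert\tilde v_i-\tilde v_{i+1}\rvert\ge 2\lvert x_0\rvert$, and the tightness example). The trade-off: your argument is more elementary and self-contained, makes quantitatively explicit how the anchor $x_0=1$ forces the loss of $2$ in the Bell sum, and correctly isolates the one point that needs care (that non-signaling imposes no cross-pair constraints beyond shared marginals, so the extremal correlators are simultaneously feasible). The paper's flow argument proves something structurally stronger --- the entire constrained non-signaling polytope coincides with the local polytope on this input graph --- which would immediately give the classical value for \emph{any} acceptance predicate over these input pairs, not just the chained-Bell predicate; your bound is specific to this predicate, though that is all the proposition requires.
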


\begin{proof}
For $a, b \in \{0, \ldots, n-1\}$, let 
\begin{align*}
\alpha_{ab} &= \Pr[X = 0, Y = 0 \vert A = a, B = b] &
\beta_{ab} &= \Pr[X = 0, Y = 1 \vert A = a, B = b] \\
\gamma_{ab} &= \Pr[X = 1, Y = 0 \vert A = a, B = b] &
\delta_{ab} &= \Pr[X = 1, Y = 1 \vert A = a, B = b]
 \enspace .
\end{align*}
These probabilities of course satisfy $\alpha_{ab} + \beta_{ab} + \gamma_{ab} + \delta_{ab} = 1$, as well as the non-signaling conditions: 
\begin{equation}\begin{aligned} \label{e:chshnonsignalingconditions}
\alpha_{jj} + \beta_{jj} &= \alpha_{j,j+1} + \beta_{j,j+1} = \Pr[X = 0 \vert A = j] \\
\alpha_{jj} + \gamma_{jj} &= \alpha_{j-1,j} + \gamma_{j-1,j} = \Pr[Y = 0 \vert B = j]
 \enspace .
\end{aligned}\end{equation}
By assumption, $\Pr[X = 0 \vert A = 0] = \alpha_{00} + \beta_{00} = \alpha_{01} + \beta_{01} = 1$.  

\begin{claim}
Any non-signaling strategy for the $\mathrm{CHSH}_n$ game with $\Pr[X = 0 \vert A = 0] = 1$ is a convex combination of classical strategies, i.e., strategies with $\Pr[X = 0 \vert A = a], \Pr[Y = 0 \vert B = b] \in \{0,1\}$ for all~$a, b$.  
\end{claim}

\begin{proof}
The constraints are the same as those for a unit flow through a directed graph with $4 n$ vertices, illustrated below for $n = 3$: 
\begin{equation*}
\includegraphics[scale=1.1]{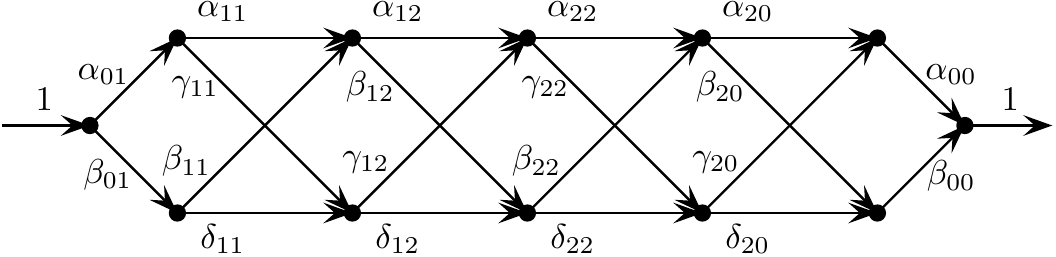}
\end{equation*}
Unit flows through the graph therefore correspond to strategies.  A unit flow is a convex combination of integer-valued flows, corresponding to a convex combination of classical strategies.  
\end{proof}

\noindent
Therefore, the non-signaling value with $\Pr[X = 0 \vert A = 0] = 1$ equals the classical value.  
\end{proof}

\subsection{Extended CHSH$_n$ game} \label{s:extendedchshngame}

For integers $k \geq 0$ and $n \geq 2$, we define the $k$-Extended CHSH$_n$ game, or ``$\mathrm{CHSH}_n + k$" for short, with players \Alice, \Bob\ and $\text{\Charliej{1}}, \ldots, \text{\Charliej{k}}$.  With $A$ and $X$ we denote \Alice's input and output, respectively, and similarly $B$ and~$Y$ for \Bob, and $C_j$ and $Z_j$ for \Charliej{j}.  The message ranges are $A \in \{-n+1, -n+2, \ldots, n-1\}$, $B \in \{0, 1, \ldots, n-1\}$ and $X, Y, C_j, Z_j \in \{0,1\}$.  

There are two types of questions: 
\begin{itemize}
\item 
In a \emph{consistency} question, the inputs are $A = 0$ and $C_J = 0$ for a uniformly random index $J \in [k]$.  The verifier accepts if $X = Z_J$.  
\item 
In a \emph{game} question, the verifier either sets $A = 0$, or chooses $A \in \{-n+1, \ldots, n-1\} \smallsetminus \{0\}$ uniformly at random.  The verifier chooses $B \in \{ \abs{A}, \abs{A} + 1 \pmod n \}$ uniformly at random, and if $A \neq 0$ sets $C_1 = \cdots = C_k = 1$.  Let $S = \chi_{A < 0}$ and $Z = Z_1 \oplus \cdots \oplus Z_k$.  The verifier accepts if 
\begin{gather*}
(A = 0 \text{ and } X = Y) \\
\text{or } \\
(A \neq 0 \text{ and } Z \neq S) \\
\text{or } \\
\Bigg( A \neq 0 \text{ and } Z = S \text{ and } \begin{cases}X \neq Y & \text{if $\abs{A} = B = n-1$} \\ X = Y & \text{otherwise} \end{cases} \Bigg)
 \enspace .
\end{gather*}
\end{itemize}
The verifier, conditioned on asking a game question, sets $A = 0$ with probability $p = \tfrac{1}{2n-1}$; this ensures that in game questions, $\Pr[A = 0 \vert A \geq 0] = \Pr[A = 0 \vert A \leq 0] = 1/n$.  The verifier asks a consistency question with probability $q = 1 / \big( 1 + \tfrac{1}{(3^k - 1) p} \big)$; this is $\Theta(1/n)$ for constant~$k$, and implies that conditioned on $A = 0$, game and consistency questions are comparably likely.  (The precise value is set to optimize \claimref{t:markov}.)  

Note that on input $A = 0$, \Alice\ cannot distinguish between a game and a consistency question, nor between the different consistency questions.  
Also note that the $\mathrm{CHSH}_2 + k$ game is different from $\mathrm{CHSH} + k$ defined in \secref{s:extendedchshgame}.  This game is more complicated because for $n > 2$ the verifier is generally unable to adjust \Alice's answer on input $A \neq 0$ to account for a possible $\sigma^z$ correction.  (This gives up a factor of two in the analysis.)  

\begin{proposition} \label{t:extendedchshnquantumvalue}
There exists a quantum strategy for the $\mathrm{CHSH}_n + k$ game, using the shared state $\tfrac{1}{\sqrt 2}(\ket{0^{k+2}} + \ket{1^{k+2}})$, such that 
\begin{align*}
\Pr[\mathrm{win} \vert \text{\emph{game question with $A = 0$}}] &= \valuequantum(\mathrm{CHSH}_n) \\
\Pr[\mathrm{win} \vert A > 0] = \Pr[\mathrm{win} \vert A < 0] &= \tfrac12 \big(1 + \valuequantum(\mathrm{CHSH}_n)\big) \\
\Pr[\mathrm{win} \vert \text{\emph{consistency question}}] &= 1
 \enspace .
\end{align*}
In particular, as $\valuequantum(\mathrm{CHSH}_n) = \cos^2 \tfrac{\pi}{4 n}$, 
\begin{equation*}
\Pr[\mathrm{win}] 
= 1 - (1 - q) \frac{n}{2n-1} \sin^2 \frac{\pi}{4 n}
> 1 - \frac{1}{2 n^2}
 \enspace .
\end{equation*}
\end{proposition}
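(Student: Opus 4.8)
The plan is to extend the optimal $\mathrm{CHSH}_n$ strategy across the GHZ state, letting $\text{\Charliej{1}}, \ldots, \text{\Charliej{k}}$ act as ``switches'' that decide which variant of the shared state \Alice\ and \Bob\ end up with. Assign one qubit of $\tfrac{1}{\sqrt 2}(\ket{0^{k+2}} + \ket{1^{k+2}})$ to each player. On any input equal to~$0$, \Alice\ and every \Charliej{j} measure $\sigma^z$; on input~$1$ each \Charliej{j} measures $\sigma^x$; and \Bob\ on input~$b$ measures the optimal $\mathrm{CHSH}_n$ direction $\cos((b-\tfrac12)\tfrac\pi n)\sigma^z + \sin((b-\tfrac12)\tfrac\pi n)\sigma^x$. \Alice, on input $A\neq 0$, measures at the optimal $\mathrm{CHSH}_n$ angle for the effective question~$\abs A$, but with the sign of her $\sigma^x$ component flipped according to $\mathrm{sgn}(A)$; on input $A = 0$ she measures $\sigma^z$.

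First I would dispatch the two easy cases. For a consistency question all queried players measure $\sigma^z$, and since $\tfrac{1}{\sqrt 2}(\ket{0^{k+2}} + \ket{1^{k+2}})$ is perfectly correlated in the computational basis, $X = Z_J$ always, so $\Pr[\mathrm{win}\mid\text{consistency}] = 1$. For a game question with $A = 0$, \Alice's $\sigma^z$ measurement collapses her qubit and, with it, \Bob's qubit into the matching computational state $\ket X$ (the \Charlie\ qubits, not queried here, collapse too but are irrelevant). \Bob's angled measurement on $\ket X$ then reproduces exactly the $\mathrm{CHSH}_n$ correlation between question~$0$ and question~$B\in\{0,1\}$, giving $\Pr[\mathrm{win}\mid\text{game},\,A=0] = \valuequantum(\mathrm{CHSH}_n)$.

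The heart of the argument is the case $A\neq 0$, where each \Charliej{j} measures $\sigma^x$. Expanding $\tfrac{1}{\sqrt 2}(\ket{0^{k+2}} + \ket{1^{k+2}})$ in the $\sigma^x$ eigenbasis on the \Charlie\ qubits shows that, conditioned on outcomes $z_1,\dots,z_k$, the post-measurement \Alice-\Bob\ state is $\tfrac{1}{\sqrt 2}(\ket{00} + (-1)^Z\ket{11})$ with $Z = z_1\oplus\cdots\oplus z_k$, and that every outcome string is equally likely, so $Z$ is a uniform bit independent of~$A$. Using $(\sigma^z\otimes\Id)\tfrac{1}{\sqrt 2}(\ket{00}+\ket{11}) = \tfrac{1}{\sqrt 2}(\ket{00}-\ket{11})$, the sign $(-1)^Z$ is absorbed into a negation of \Alice's $\sigma^x$ component, which is precisely the flip built into her strategy for $A<0$. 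Consequently, writing $S = \chi_{A<0}$, when $Z = S$ the pair realizes the optimal $\mathrm{CHSH}_n$ correlation on effective question $\abs A\in\{1,\dots,n-1\}$ against \Bob's question~$B$, and the acceptance predicate of \secref{s:extendedchshngame} on the branch $Z=S$ is exactly the $\mathrm{CHSH}_n$ predicate, so they win with probability $\valuequantum(\mathrm{CHSH}_n)$; when $Z\neq S$ the verifier accepts unconditionally. Since $Z$ is uniform, $\Pr[\mathrm{win}\mid A\neq 0] = \tfrac12 + \tfrac12\valuequantum(\mathrm{CHSH}_n)$, identically for $A>0$ and $A<0$.

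Finally I would assemble the pieces. Writing $v = \valuequantum(\mathrm{CHSH}_n) = \cos^2\tfrac{\pi}{4n}$ and $p = \tfrac{1}{2n-1} = \Pr[A=0\mid\text{game}]$, the conditional loss on a game question is $1 - \big(pv + (1-p)\tfrac{1+v}2\big) = \tfrac{1-v}2(1+p) = \tfrac{n}{2n-1}\sin^2\tfrac{\pi}{4n}$, and multiplying by $\Pr[\text{game}] = 1-q$ yields the stated $\Pr[\mathrm{win}] = 1 - (1-q)\tfrac{n}{2n-1}\sin^2\tfrac{\pi}{4n}$. The bound $\Pr[\mathrm{win}] > 1 - \tfrac{1}{2n^2}$ then follows from $\sin x < x$ together with $\tfrac{n}{2n-1}\le\tfrac23$ for $n\ge 2$. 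The main obstacle is the $A\neq 0$ analysis: carrying out the GHZ $\sigma^x$-collapse computation and, in particular, verifying that the acceptance predicate was engineered so that the ``wrong'' sign $Z\neq S$ is an automatic \emph{win} rather than an automatic loss, since this is exactly what converts \Charlie's uniformly random parity into a guaranteed half-probability of success.
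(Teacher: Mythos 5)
Your proposal is correct and follows essentially the same route as the paper's proof: the same measurement strategy (Alice's sign flip for $A<0$ being exactly the paper's ``apply $\sigma^z$ then play CHSH$_n$ on $\abs{A}$''), the same three-case analysis, and the same observation that the branch $Z \neq S$ is an automatic accept, yielding $\tfrac12(1+\valuequantum(\mathrm{CHSH}_n))$ for $A \neq 0$. Your write-up additionally carries out the final probability algebra and the $1 - \tfrac{1}{2n^2}$ bound, which the paper states without detail.
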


\begin{proof}
We construct a quantum strategy.  

If her input $A < 0$, then \Alice\ applies $\sigma^z$.  She proceeds to play according to the optimal CHSH$_n$ strategy from \secref{s:chshn} and \figref{f:chshn}, for input $\abs A$.  In particular, if $A = 0$ then she measures~$\sigma^z$.  

\Bob\ plays according to the optimal CHSH$_n$ strategy, above.  

\Charliej{j}\ measures his qubit in the $\sigma^z$ basis on input~$0$, and in the $\sigma^x$ basis on input~$1$.  Therefore in a consistency question, \Alice\ and all \Charlie s get the same result, and the verifier accepts always.  

In a game question, then after the \Charlie s' $\sigma^x$ measurements, \Alice\ and \Bob\ share $\tfrac{1}{\sqrt 2}(\ket{00} + \ket{11})$, if $Z = 0$, or $(\sigma^z \otimes I) \tfrac{1}{\sqrt 2}(\ket{00} + \ket{11})$, if $Z = 1$.  
If $A = 0$, then they win with probability $\cos^2 \tfrac{\pi}{4n}$.  
If $A \neq 0$ and $S = Z$, then the shared state after \Alice's possible $\sigma^z$ correction is $\tfrac{1}{\sqrt 2}(\ket{00} + \ket{11})$, so they win with probability $\cos^2 \tfrac{\pi}{4n}$.  Since $\Pr[S = Z] = \tfrac12$, $\Pr[\text{win} \vert J = \mathrm{\Bob}, A \neq 0] = \tfrac12 (1 + \cos^2 \tfrac{\pi}{4n})$.  
\end{proof}

\end{document}